\newcommand{\figsize}{0.9\columnwidth}
 \newcommand{\putTable}[3]{\begin{table}[t]
  			    \centering
		            #3
     			    \caption{\small #2}
     			    \label{tab:#1}
			    \vspace{-4mm}
			  \end{table} }
 \newcommand{\putFrag}[4]{\begin{figure}[t]
                            \begin{center}
                            #4
			    \includegraphics[width=#3]{figures/#1.eps}
            		    \caption{\small #2}
           		    \label{fig:#1}
                            \end{center}
                          \end{figure} }
 \newcommand{\twoFrag}[6]{\begin{figure}[t]
                            \begin{center}
                            #6
			    \hfill
			    \includegraphics[width=#3\columnwidth]{figures/#1.eps}
			    \hfill
			    \includegraphics[width=#4\columnwidth]{figures/#2.eps}
			    \hfill
                            \mbox{}
            		    \caption{\small #5}
           		    \label{fig:#1}
                            \end{center}
                          \end{figure} }
 \newcommand{\threeFrag}[8]{\begin{figure}[t]
                            \begin{center}
                            #8
			    \hfill
			    \includegraphics[width=#4\columnwidth]{figures/#1.eps}
			    \hfill
			    \includegraphics[width=#5\columnwidth]{figures/#2.eps}
			    \hfill
			    \includegraphics[width=#6\columnwidth]{figures/#3.eps}
			    \hfill
                            \mbox{}
            		    \caption{\small #7}
           		    \label{fig:#1}
                            \end{center}
                          \end{figure} }
 \newcommand{\defn}{\triangleq}
 \newcommand{\hvec}[1]{\ensuremath{\boldsymbol{\Hat{#1}}}}
 \renewcommand{\vec}[1]{\ensuremath{\boldsymbol{#1}}}
 \newcommand{\mat}[1]{\ensuremath{\begin{bmatrix}#1\end{bmatrix}}}
 \newcommand{\norm}[1]{\ensuremath{\| #1 \|}}
 \newcommand{\mc}[1]{\ensuremath{\mathcal{#1}}}
 \newcommand{\st}{{~\text{s.t.}~}}
 \newcommand{\barst}[1]{\ensuremath{\text{\raisebox{-0.5mm}{$\big|_{#1}$}}}}
 \newcommand{\Real}{{\mathbb{R}}}
 \newcommand{\Complex}{{\mathbb{C}}}
 \newcommand{\tran}{^\textsf{T}}
 \newcommand{\of}[1]{^{(#1)}}
 \DeclareMathOperator{\diag}{diag}
 \DeclareMathOperator*{\argmax}{arg\, max}
 \DeclareMathOperator*{\argmin}{arg\, min}
 \newcommand*\deriv{\mathop{}\!\mathrm{d}}
 \newtheorem{theorem}{Theorem}
 \newtheorem{corollary}[theorem]{Corollary} % remove [theorem] for separate numbering
 \renewcommand{\eqref}[1]{(\ref{eq:#1})}
 \newcommand{\Figref}[1]{Figure~\ref{fig:#1}}
 \newcommand{\figref}[1]{Fig.~\ref{fig:#1}}
 \newcommand{\tabref}[1]{Table~\ref{tab:#1}}
 \newcommand{\secref}[1]{Sec.~\ref{sec:#1}}
 \newcommand{\appref}[1]{Appendix~\ref{app:#1}}
 \newcommand{\thmref}[1]{Theorem~\ref{thm:#1}}
 \newcommand{\corref}[1]{Corollary~\ref{cor:#1}}
 \renewcommand{\algref}[1]{Algorithm~\ref{alg:#1}}
 \newcommand{\lineref}[1]{line~\ref{line:#1}}
 \newcommand{\partref}[1]{Part~\ref{part:#1}}
 \newcounter{comment}[section]
 \newcounter{texthead}[section]
 \newcommand{\map}{_\textsf{MAP}}
 \newcommand{\IRW}{IRW-L1\xspace}
 \newcommand{\score}{Co-L1\xspace}
 \newcommand{\scoreirw}{Co-IRW-L1-\texorpdfstring{$\vec{\epsilon}$\xspace}{eps}}
 \newcommand{\ascoreirw}{Co-IRW-L1\xspace}
 \newcommand{\Rell}{R_1}
 \newcommand{\Rlog}{R_{\text{ls}}}
 \newcommand{\Rlsl}{R_{\text{lsl}}}
 \newcommand{\const}{\text{const}}
 \newcommand{\Dkl}{D_{\text{\textsf{KL}}}}
 \newcommand{\mysf}[1]{\textsf{\small{#1}}}
\def\BState{\State\hskip-\ALG@thistlm}
\xpatchcmd{\algorithmic}{\itemsep\z@}{\itemsep=0.5ex plus2pt}{}{} % adds horizontal space in algorithmic 
\begin{document}

\setlength{\arraycolsep}{0.4mm}
 \title{Iteratively Reweighted \texorpdfstring{$\ell_1$}{L1} Approaches to Sparse Composite Regularization}
         \author{Rizwan Ahmad and Philip Schniter, \emph{Fellow, IEEE}%
         \thanks{This work has been supported in part by NSF grants CCF-1218754 and CCF-1018368. Portions of this work were presented at the 2015 ISMRM Annual Meeting and Exhibition.}%
         \thanks{The authors are with the Department of Electrical and Computer Engineering, The Ohio State University, Columbus, OH, USA.  (rizwan.ahmad@osumc.edu, schniter@ece.osu.edu, phone 614.247.6488, fax 614.292.7596).}%
         }
 \date{\today}
 \maketitle

\begin{abstract}
Motivated by the observation that a given signal $\boldsymbol{x}$ admits sparse representations in multiple dictionaries $\boldsymbol{\Psi}_d$ but with varying levels of sparsity across dictionaries, we propose two new algorithms for the reconstruction of (approximately) sparse signals from noisy linear measurements.
Our first algorithm, Co-L1, extends the well-known lasso algorithm from the L1 regularizer $\|\boldsymbol{\Psi x}\|_1$ to composite regularizers of the form $\sum_d \lambda_d \|\boldsymbol{\Psi}_d \boldsymbol{x}\|_1$ while self-adjusting the regularization weights $\lambda_d$.
Our second algorithm, Co-IRW-L1, extends the well-known iteratively reweighted L1 algorithm to the same family of composite regularizers.
We provide several interpretations of both algorithms: 
i) majorization-minimization (MM) applied to a non-convex log-sum-type penalty,
ii) MM applied to an approximate $\ell_0$-type penalty,
iii) MM applied to Bayesian MAP inference under a particular hierarchical prior, and 
iv) variational expectation-maximization (VEM) under a particular prior with deterministic unknown parameters.
A detailed numerical study suggests that our proposed algorithms yield significantly improved recovery SNR when compared to their non-composite L1 and IRW-L1 counterparts.
\end{abstract}

\section{Introduction} \label{sec:intro}
We consider the problem of recovering the signal (or image) $\vec{x} \in \Complex^N$ from noisy linear measurements of the form
\begin{align}
\vec{y} 
&= \vec{\Phi x}+\vec{w} \in \Complex^M,
\label{eq:y}
\end{align}
where $\vec{\Phi} \in \Complex^{M \times N}$ is a known measurement operator and
$\vec{w} \in \Complex^M$ is additive noise.
Such problems arise in imaging, machine learning, radar, communications, speech, and many other applications.
We are particularly interested in the case that $M\ll N$, where $\vec{x}$ cannot be uniquely determined from the measurements $\vec{y}$, even in the absence of noise. 
This latter situation arises in many of the aforementioned applications, as well as in broad area of signal recovery methods associated with \emph{compressive sensing} (CS) \cite{CandesWakin:SPM:08}.

%%%%%%%%%%%%%%%%%%%%%%%%%%%%
\subsection{Regularized \texorpdfstring{$\ell_2$}{L2}-Minimization} \label{sec:opt}
By incorporating (partial) prior knowledge about the signal and noise power, it may be possible to accurately recover $\vec{x}$ from $M\ll N$ measurements $\vec{y}$.
In this work, we consider signal recovery based on optimization problems of the form
\begin{align}
\argmin_{\vec{x}} 
\gamma \|\vec{y}-\vec{\Phi x}\|^2_2  +
R(\vec{x}) 
\label{eq:opt} 
\end{align}
where $\gamma$ is a tuning parameter that reflects knowledge of the noise level
and $R(\vec{x})$ is a penalty, or regularization, that reflects prior knowledge about the signal $\vec{x}$ \cite{Mairal:FTCV:14}.
We briefly summarize several common instances of $R(\vec{x})$ below.
\begin{enumerate}[\IEEEsetlabelwidth{0}]
\item
If $\vec{x}$ is known to be \emph{sparse} (i.e., contains sufficiently few non-zero coefficients) or approximately sparse, then one would ideally like to use 
the $\ell_0$ penalty (i.e., counting ``norm'') $R(\vec{x})=\|\vec{x}\|_0\defn\sum_{n=1}^N 1_{|x_n|>0}$, where $1_{\{\cdot\}}$ is the indicator function.
However, since this choice makes \eqref{opt} NP-hard, it is not often used in practice.
\item
The $\ell_1$ penalty, $R(\vec{x})\!=\!\|\vec{x}\|_1\!=\!\sum_{n=1}^N|x_n|$, is a well-known surrogate to the $\ell_0$ penalty that renders \eqref{opt} convex, and thus amenable to polynomial-time solution.
In this case, \eqref{opt} is known as the \emph{basis pursuit denoising} \cite{Chen:JSC:98} or \emph{lasso} \cite{Tibshirani:JRSSb:96} problem, which is commonly used in \emph{synthesis-based} CS \cite{CandesWakin:SPM:08}.
\item
Various non-convex surrogates for the $\ell_0$ penalty have also been considered, such as the $\ell_p$ penalty $R(\vec{x})\!=\!\|\vec{x}\|_p^p\!=\!\sum_{n=1}^N|x_n|^p$ with $p\in(0,1)$ and the log-sum penalty $R(\vec{x})\!=\!\sum_{n=1}^N \log(\epsilon+|x_n|)$ with $\epsilon\geq 0$. 
Although \eqref{opt} becomes difficult to solve, it can be tractably approximated.
See \cite{Mairal:FTCV:14} for a more complete discussion.
\item
The choice $R(\vec{x})=\|\vec{\Psi x}\|_1$, with known matrix $\vec{\Psi}\in\Complex^{L\times N}$, leads to \emph{analysis-based} CS \cite{Elad:IP:07} and the \emph{generalized lasso} \cite{Tibshirani:AS:11}.
Penalties of this form are appropriate when prior knowledge suggests that the transform 
coefficients $\vec{\Psi x}$ are (approximately) sparse, as opposed to the signal $\vec{x}$ itself being sparse.
When $\vec{\Psi}$ is a finite-difference operator, $\|\vec{\Psi x}\|_1$ yields anisotropic \emph{total variation} regularization \cite{Rudin:PhyD:92}.
\item
Non-convex penalties can also be placed on the transform coefficients $\vec{\Psi x}$, leading to, e.g.,
$R(\vec{x})\!=\!\|\vec{\Psi x}\|_p^p\!=\!\sum_{l=1}^L|\vec{\psi}_l\tran\vec{x}|^p$ with $p\in(0,1)$ or $R(\vec{x})\!=\!\sum_{l=1}^L \log(\epsilon+|\vec{\psi}_l\tran\vec{x}|)$ with $\epsilon\geq 0$. 

\end{enumerate}

A popular approach to solve \eqref{opt} with a non-convex penalty $R(\vec{x})$ is through \emph{iteratively reweighted $\ell_1$} (\IRW)\footnote{Iteratively reweighted $\ell_2$ is a popular alternative, e.g., \cite{Figueiredo:TPAMI:03,Figueiredo:TIP:07,Chartrand:ICASSP:08,Daubechies:CPAM:10,Wipf:JSTSP:10}.}
\cite{Figueiredo:TIP:07}.
There, \eqref{opt} with fixed non-convex $R(\vec{x})$ is approximated by solving a sequence of convex problems
\begin{align}
\vec{x}\of{t}
&= \argmin_{\vec{x}} 
\gamma \|\vec{y}-\vec{\Phi x}\|^2_2  + R\of{t}(\vec{x}) ,
\end{align}
where, at iteration $t$, the penalty $R\of{t}(\vec{x})\!=\!\sum_{n=1}^N w_n\of{t}|x_n|$ with each weight $w_n\of{t}$ set based on the previous estimate $x_n\of{t-1}$.
Constrained formulations of \IRW based on ``$\vec{x}\of{t}\!=\!\arg\min_{\vec{x}} R\of{t}(\vec{x}) \st \|\vec{y}-\vec{\Phi x}\|_2\leq \delta$,'' have also been considered, such as in \cite{Candes:JFA:08,Wipf:JSTSP:10,Carrillo:SPL:13}.
Many of the papers cited above show empirical results where the performance of \IRW surpasses that of standard $\ell_1$.

%%%%%%%%%%%%%%%%%%%%%%%%%%%%
\subsection{Sparsity-Inducing Composite Regularizers} \label{sec:composite}

In this work, we focus on sparsity-inducing \emph{composite} regularizers of the form 
\begin{align}
\Rell^D(\vec{x};\vec{\lambda})
&\defn \sum_{d=1}^D \lambda_d \|\vec{\Psi}_d \vec{x}\|_1 
\label{eq:Rell} ,
\end{align}
where each $\vec{\Psi}_d\in\Complex^{L_d\times N}$ is a known analysis operator and $\lambda_d\geq 0$ is a corresponding regularization weight.
Our goal is to recover the signal $\vec{x}$ from measurements \eqref{y} by optimizing \eqref{opt} with the composite regularizer \eqref{Rell}.
Doing so requires an optimization of the weights $\vec{\lambda}=[\lambda_1,\dots,\lambda_D]\tran$ in \eqref{Rell}.
We are also interested in iteratively re-weighted extensions of this problem that, at iteration $t$, use composite regularizers of the form\footnote{Although \eqref{Rellirw} is over-parameterized, the form of \eqref{Rellirw} is convenient for algorithm development.} 
\begin{align}
R\of{t}(\vec{x})
&= \sum_{d=1}^D \lambda_d\of{t} \|\vec{W}_d\of{t}\vec{\Psi}_d \vec{x}\|_1 ,
\label{eq:Rellirw}  
\end{align}
where $\vec{W}_d\of{t}$ are diagonal matrices.
This latter approach requires the optimization of both $\lambda_d\of{t}$ and $\vec{W}_d\of{t}$ for all $d$.

As a motivating example, 
suppose that $\{\vec{\Psi}_d\}$ is a collection of orthonormal bases that includes, e.g., spikes, sines, and various wavelet bases.
The signal $\vec{x}$ may be sparse in some of these bases, but not all.  
Thus, we would like to adjust each $\lambda_d$ in \eqref{Rell} to appropriately weight the contribution from each basis.  
But it is not clear how to do this, especially since $\vec{x}$ is unknown.
As another example, 
suppose that $\vec{x}$ contains a (rasterized) sequence of images and that $\|\vec{\Psi}_1\vec{x}\|_1$ measures temporal total-variation while $\|\vec{\Psi}_2\vec{x}\|_1$ measures spatial total-variation.
Intuitively, we would like to weight these two regularizations differently, depending on whether the image pixels vary more in the temporal or spatial dimensions.
But it is not clear how to do this, especially since $\vec{x}$ is unknown.

%%%%%%%%%%%%%%%%%%%%%%%%%%%%
\subsection{Contributions}

In this work, we propose novel iteratively reweighted approaches to sparse reconstruction based on composite regularizations of the form \eqref{Rell}-\eqref{Rellirw} with automatic tuning of the regularization weights $\vec{\lambda}$ and $\vec{W}_d$.
For each of our proposed algorithms, we will provide four interpretations:
\begin{enumerate}
\item
MM applied to a non-convex log-sum-type penalty,
\item
MM applied to an approximate $\ell_0$-type penalty,
\item
MM applied to Bayesian MAP inference based on Gamma and Jeffrey's hyperpriors \cite{Figueiredo:TIP:01,Oliveira:SP:09}, 
and 
\item
variational expectation maximization (VEM) \cite{Neal:Jordan:98,Bishop:Book:07} applied to a Laplacian or generalized-Pareto prior with deterministic unknown parameters.
\end{enumerate}
We show that the MM interpretation guarantees convergence in the sense of satisfying an asymptotic stationary point condition \cite{Mairal:ICML:13}.
Moreover, we establish connections between our proposed approaches and existing \IRW algorithms, and we provide novel VEM-based and Bayesian MAP interpretations of those existing algorithms.

Finally, through the detailed numerical study in \secref{num}, we establish that our proposed algorithms yield significant gains in recovery accuracy relative to existing methods with only modest increases in runtime.
In particular, when $\{\vec{\Psi}_d\}$ are chosen so that the sparsity of $\vec{\Psi}_d\vec{x}$ varies with $d$, this structure can be exploited for improved recovery.  The more disparate the sparsity, the greater the improvement.

%%%%%%%%%%%%%%%%%%%%%%%%%%%%
\subsection{Related Work} \label{sec:related}

As discussed above, the generalized lasso \cite{Tibshirani:AS:11} is one of the most common approaches to L1-regularized analysis-CS \cite{Elad:IP:07}, i.e., the optimization \eqref{opt} under the regularizer $R(\vec{x})=\|\vec{\Psi x}\|_1$. 
The \score algorithm that we present in \secref{score} can be interpreted as a generalization of this L1 method to \emph{composite} regularizers of the form \eqref{Rell}.
Meanwhile, the iteratively reweighted extension of the generalized lasso, \IRW \cite{Figueiredo:TIP:07}, often yields significantly better reconstruction accuracy with a modest increase in complexity (e.g., \cite{Candes:JFA:08,Carrillo:SPL:13}).
The \ascoreirw algorithm that we present in \secref{scoreirw} can be interpreted as a generalization of this \IRW method to \emph{composite} regularizers of the form \eqref{Rellirw}.
The existing non-composite L1 and \IRW approaches essentially place an identical weight $\lambda_d=1$ on every term in \eqref{Rell}-\eqref{Rellirw}, and thus make no attempt to leverage differences in the sparsity of the transform coefficients $\vec{\Psi}_d\vec{x}$ across the sub-dictionary index $d$.
However, the numerical results that we present in \secref{num} suggest that there can be significant advantages to optimizing $\lambda_d$, which is precisely what our methods do.

The problem of optimizing the weights $\lambda_d$ of composite regularizers $R(\vec{x};\vec{\lambda})=\sum_d \lambda_d R_d(\vec{x})$ is a long-standing problem with a rich literature (see, e.g., the recent book \cite{Lu:Book:13}). 
However, the vast majority of that literature focuses on the Tikhonov case where $R_d(\vec{x})$ are quadratic (see, e.g., \cite{Brezinski:NM:03,Xu:JG:06,Gazzola:ETNA:13,Fornasier:JNA:14}).
One notable exception is \cite{Belge:IP:02}, which assumes continuously differentiable $R_d(\vec{x})$ and thus does not cover our composite $\ell_1$ prior \eqref{Rell}.
Another notable exception is \cite{Kunisch:JIS:13}, which assumes i) the availability of a noiseless training example of $\vec{x}$ to help tune the L1 regularization weights $\vec{\lambda}$ in \eqref{Rell}, and ii) the trivial measurement matrix $\vec{\Phi}=\vec{I}$.
In contrast, our proposed methods operate without any training and support generic measurement matrices $\vec{\Phi}$.

In the special case that each $\vec{\Psi}_d$ is composed of a subset of rows from the $N\times N$ identity matrix,
the regularizers \eqref{Rell}-\eqref{Rellirw} can induce \emph{group} sparsity in the recovery of $\vec{x}$, in that certain sub-vectors $\vec{x}_d\defn\vec{\Psi}_d\vec{x}$ of $\vec{x}$ are driven to zero while others are not.
The paper \cite{Rakotomamonjy:SP:11} develops an \IRW-based approach to group-sparse signal recovery for equal-sized non-overlapping groups that can be considered as a special case of the \score algorithm that we develop in \secref{score}.
However, our approach is more general in that it handles possibly non-equal and/or overlapping groups, not to mention sparsity in a generic set of sub-dictionaries $\vec{\Psi}_d$.
Recently, Bayesian MAP group-sparse recovery was considered in \cite{Babacan:TSP:14}.
However, the technique described there uses Gaussian scale mixtures or, equivalently, weighted-L2 regularizers $R(\vec{x};\vec{\lambda})=\sum_d \lambda_d \|\vec{x}_d\|_2$, while our methods use weighted-$\ell_1$ regularizers \eqref{Rell}-\eqref{Rellirw}.

%%%%%%%%%%%%%%%%%%%%%%%%%%%%
\subsection{Notation}

We use boldface capital letters like $\vec{\Psi}$ for matrices,
boldface small letters like $\vec{x}$ for vectors,
and $(\cdot)\tran$ for transposition.
We use $\norm{\vec{x}}_p=(\sum_n |x_n|^p)^{1/p}$ for the $\ell_p$ norm of $\vec{x}$, with $x_n$ representing the $n^{th}$ coefficient in $\vec{x}$ and $p>0$.
We then use $\norm{\vec{x}}_0=\lim_{p\rightarrow 0} \sum_n |x_n|^p$ \cite{Wipf:JSTSP:10} when referring to the $\ell_0$ quasi-norm, which counts the number of nonzero coefficients in $\vec{x}$. 
We define the ``mixed $\ell_{p,0}$ quasi-norm'' with $p>0$ as\footnote{%
Our $\ell_{p,0}$ and $\ell_{0,0}$ definitions are motivated by the standard $\ell_{p,q}$ mixed norm definition (for $p,q>0$), which is $(\sum_d (\sum_l |x_{d,l}|^p)^{q/p})^{1/q}$ \cite{Kowalski:ACHA:09}.}
$\lim_{q\rightarrow 0} \sum_d (\sum_l |x_{d,l}|^p)^q$, 
and the ``mixed $\ell_{0,0}$ quasi-norm'' as $\lim_{p,q\rightarrow 0} \sum_d (\sum_l |x_{d,l}|^p)^q$. 
We use $\nabla g(\vec{x})$ for the gradient of a functional $g(\vec{x})$ with respect to $\vec{x}$, and 
$1_A$ for the indicator function that returns the value $1$ when $A$ is true and $0$ when $A$ is false.
We use $p(\vec{x};\vec{\lambda})$ for the pdf of random vector $\vec{x}$ under deterministic parameters $\vec{\lambda}$, and $p(\vec{x}|\vec{\lambda})$ for the pdf of $\vec{x}$ conditioned on the random vector $\vec{\lambda}$.
We use $\Dkl(q\|p)$ to denote the Kullback-Leibler (KL) divergence of pdf $p$ from pdf $q$, 
and 
we use $\Real$ and $\Complex$ to denote the real and complex fields, respectively.

%%%%%%%%%%%%%%%%%%%%%%%%%%%%
\section{The \score Algorithm} \label{sec:score}

We first propose the Composite-L1 (\score) algorithm, which is summarized in \algref{score}.
There, $L_d$ denotes the number of rows in $\vec{\Psi}_d$.

%--------------------
\begin{algorithm}[h]
  \caption{The \score Algorithm}
  \label{alg:score}
  \begin{algorithmic}[1]
    \State
    \textsf{input:~~} $\{\vec{\Psi}_d\}_{d=1}^D$, $\vec{\Phi}$, $\vec{y}$, 
    $\gamma>0$,
    $\epsilon \geq 0$\\
    \mysf{if} $\vec{\Psi}_d\vec{x}\in\Real^{L_d}$, \mysf{use} $C_d=1$;
    \mysf{if} $\vec{\Psi}_d\vec{x}\in\Complex^{L_d}$, \mysf{use} $C_d=2$.\\
    \textsf{initialization:~~} $\lambda_d\of{1}=1~\forall d$
    \State
    \textsf{for~} $t=1,2,3,\dots$\\
    \label{line:score_x}
    \quad $\vec{x}\of{t} \gets \displaystyle \argmin_{\vec{x}} 
        \gamma \|\vec{y} - \vec{\Phi x}\|^2_2 +
        \sum_{d=1}^D\lambda_d\of{t} \|\vec{\Psi}_d\vec{x}\|_1$
    \\
    \label{line:score_lam}
    \quad $\displaystyle \lambda_d\of{t+1} \gets \frac{C_d L_d}{\epsilon+\|\vec{\Psi}_d \vec{x}\of{t}\|_1},~~d=1,\dots,D$\\
    \textsf{end}\\
     \textsf{output:~~}$\vec{x}\of{t}$
  \end{algorithmic}
\end{algorithm}
%--------------------

The main computational step of \score is the L2+L1 minimization in \lineref{score_x}, which can be recognized as \eqref{opt} under the composite regularizer $\Rell^D$ from \eqref{Rell}.
This is a convex optimization problem that can be readily solved by existing techniques (e.g., 
ADMM \cite{Boyd:FTML:10,Afonso:TIP:10},
Douglas-Rachford splitting \cite{Combettes:JSTSP:07}, 
MFISTA \cite{Tan:TSP:14}, 
NESTA-UP \cite{Becker:JIS:11}, 
GAMP \cite{Borgerding:ICASSP:15},
etc.),
the specific choice of which is immaterial to this paper.

Note that 
\score requires the user to set a small regularization term $\epsilon\geq 0$ whose role is to prevent the denominator in \lineref{score_lam} from reaching zero.
For typical choices of $\vec{\Psi}_d$ and $\gamma$, the vector $\vec{\Psi}_d\vec{x}\of{t}$ will almost never be exactly zero, in which case it suffices to set $\epsilon=0$.
Also, \score requires the user to set the measurement fidelity weight $\gamma$. 
With additive white Gaussian noise (AWGN) of variance $\sigma^2>0$, the Bayesian MAP interpretation discussed in \secref{score_bayes} suggests setting $\gamma=\frac{1}{2\sigma^2}$ for real-valued AWGN or $\gamma=\frac{1}{\sigma^2}$ for circular complex-valued AWGN. 
These are, in fact, the settings that we used for all numerical results in \secref{num}.
 
Note \lineref{score_x} of \algref{score} can be equivalently restated as
\begin{align}
\vec{x}\of{t} 
&\gets \displaystyle \argmin_{\vec{x}} 
\sum_{d=1}^D\lambda_d\of{t} \|\vec{\Psi}_d\vec{x}\|_1 
        \st \|\vec{y} - \vec{\Phi x}\|_2 \leq \delta .
\label{eq:score_x_con}
\end{align}
By equivalent, we mean that, for any $\delta>0$, there exists a $\gamma$ for which the solutions of \lineref{score_x} and \eqref{score_x_con} are identical \cite{Lorenz:IP:13}.
A version of this manuscript that focuses on the constrained case can be found at \cite{Ahmad:constrained:15}.
Numerical experiments therein show that the performance of \score using \eqref{score_x_con} with the hand-tuned value $\delta=0.8\sqrt{M \sigma^2}$ is very similar to that of \algref{score} with $\gamma$ chosen as described above.

\score's update of the weights $\vec{\lambda}$, defined by \lineref{score_lam} of \algref{score}, can be interpreted in various ways, as we detail below.
For ease of explanation, we first consider the case where $\vec{\Psi}_d\vec{x}$ is real-valued $\forall d$, and later discuss the complex-valued case in \secref{score_complex}.

\begin{theorem}[\score] \label{thm:score}
The \score algorithm in \algref{score} has the following interpretations:
\begin{enumerate}
\item \label{part:score_logsum}
MM applied to \eqref{opt} under the log-sum penalty 
\begin{align}
\Rlog^D(\vec{x};\epsilon)
&\defn \sum_{d=1}^D L_d \log( \epsilon +\|\vec{\Psi}_d \vec{x}\|_1 ) ,
\label{eq:Rlog}
\end{align}

\item \label{part:score_ell0}
as $\epsilon\rightarrow 0$,
MM applied to \eqref{opt} under the weighted $\ell_{1,0}$ \cite{Kowalski:ACHA:09} penalty
\begin{align}
R_{10}^D(\vec{x})
&\defn \sum_{d=1}^D L_d \,1_{\|\vec{\Psi}_d\vec{x}\|_1>0}
\label{eq:score_ell0} ,
\end{align}

\item \label{part:score_bayes}
MM applied to Bayesian MAP estimation under an additive white Gaussian noise (AWGN) likelihood and the hierarchical prior
\begin{align}
p(\vec{x}|\vec{\lambda}) 
&= \prod_{d=1}^D \bigg(\frac{\lambda_d}{2}\bigg)^{L_d} \exp\big({-\lambda_d} \|\vec{\Psi}_d\vec{x}\|_1\big)
\label{eq:score_bayes_prior_x} \\
\vec{\lambda} &\sim \text{i.i.d.~}\Gamma(0,\epsilon^{-1})
\label{eq:score_bayes_prior_lam} 
\end{align}
where $\vec{z}_d\!\defn\!\vec{\Psi}_d\vec{x}\in\Real^{L_d}$ is i.i.d.\ Laplacian given $\lambda_d$, and $\lambda_d$ is Gamma distributed with scale parameter $\epsilon^{-1}$ and shape parameter zero, which becomes
Jeffrey's non-informative hyperprior $p(\lambda_d)\propto 1_{\lambda_d>0}/\lambda_d$ when $\epsilon=0$.  
\item \label{part:score_em}
variational EM under an AWGN likelihood and the prior
\begin{align}
p(\vec{x};\vec{\lambda}) 
&\propto \prod_{d=1}^D \bigg(\frac{\lambda_d}{2}\bigg)^{L_d} \!\! \exp\big({-\lambda_d} (\|\vec{\Psi}_d\vec{x}\|_1+\epsilon)\big) ,
\label{eq:score_em_prior_x} 
\end{align}
which, when $\epsilon=0$, is i.i.d.\ Laplacian on $\vec{z}_d\!=\!\vec{\Psi}_d\vec{x}\in\Real^{L_d}$ with deterministic scale parameter $\lambda_d>0$. 
\end{enumerate}
\end{theorem}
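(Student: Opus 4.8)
The plan is to verify the four interpretations in turn, taking \partref{score_logsum} as the computational core and then either reducing the remaining interpretations to it or deriving the two updates of \algref{score} directly. For \partref{score_logsum} I would apply majorization-minimization to the non-convex penalty $\Rlog^D(\vec{x};\epsilon)$ of \eqref{Rlog}. Since $\log(\cdot)$ is concave, its tangent at $u_0=\epsilon+\norm{\vec{\Psi}_d\vec{x}\of{t}}_1$ gives the global upper bound
\begin{align}
\log\!\big(\epsilon+\norm{\vec{\Psi}_d\vec{x}}_1\big)
\leq \log(u_0) + \frac{\norm{\vec{\Psi}_d\vec{x}}_1-\norm{\vec{\Psi}_d\vec{x}\of{t}}_1}{\epsilon+\norm{\vec{\Psi}_d\vec{x}\of{t}}_1},
\notag
\end{align}
with equality at $\vec{x}=\vec{x}\of{t}$. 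Scaling by $L_d$ and summing over $d$ produces a majorizer of $\Rlog^D$ whose only $\vec{x}$-dependent part is $\sum_d \frac{L_d}{\epsilon+\norm{\vec{\Psi}_d\vec{x}\of{t}}_1}\norm{\vec{\Psi}_d\vec{x}}_1$. Adding the exact data term $\gamma\norm{\vec{y}-\vec{\Phi x}}_2^2$ and minimizing over $\vec{x}$ is precisely \lineref{score_x}, once the coefficient $L_d/(\epsilon+\norm{\vec{\Psi}_d\vec{x}\of{t}}_1)$ is identified with $\lambda_d\of{t+1}$ as in \lineref{score_lam} (with $C_d=1$). This settles \partref{score_logsum}.

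For \partref{score_bayes} I would marginalize the weights $\vec{\lambda}$ out of the hierarchical prior \eqref{score_bayes_prior_x}--\eqref{score_bayes_prior_lam}. With the Gamma/Jeffrey's hyperprior $p(\lambda_d)\propto\lambda_d^{-1}e^{-\epsilon\lambda_d}$ and the elementary identity $\int_0^\infty \lambda^{L_d-1}e^{-\lambda u}\deriv\lambda=\Gamma(L_d)\,u^{-L_d}$, which converges whenever $u=\epsilon+\norm{\vec{\Psi}_d\vec{x}}_1>0$, the marginal factorizes as $p(\vec{x})\propto\prod_d(\epsilon+\norm{\vec{\Psi}_d\vec{x}}_1)^{-L_d}$, so that $-\log p(\vec{x})=\Rlog^D(\vec{x};\epsilon)+\const$. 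Together with the AWGN log-likelihood $-\log p(\vec{y}\giv\vec{x})=\gamma\norm{\vec{y}-\vec{\Phi x}}_2^2+\const$ (taking $\gamma=1/(2\sigma^2)$), MAP estimation of $\vec{x}$ becomes exactly \eqref{opt} under $\Rlog^D$, which reduces \partref{score_bayes} to \partref{score_logsum}. Any Jacobian from the reparametrization $\vec{x}\mapsto\{\vec{\Psi}_d\vec{x}\}$ is $\vec{x}$-independent and therefore affects only the irrelevant additive constant.

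For \partref{score_em} I would run variational EM with $\vec{x}$ latent and $\vec{\lambda}$ a deterministic parameter, restricting the variational posterior to the degenerate family of point masses $q(\vec{x})=\delta(\vec{x}-\hvec{x})$. For this family the evidence lower bound reduces, up to a constant, to $\log p(\vec{y}\giv\hvec{x})+\log p(\hvec{x};\vec{\lambda})$. The E-step maximizes this over $\hvec{x}$ at fixed $\vec{\lambda}=\vec{\lambda}\of{t}$ and is exactly \lineref{score_x}; the M-step maximizes over $\vec{\lambda}$ at fixed $\hvec{x}=\vec{x}\of{t}$, i.e., solves $\frac{\partial}{\partial\lambda_d}\big[L_d\log(\lambda_d/2)-\lambda_d(\norm{\vec{\Psi}_d\vec{x}\of{t}}_1+\epsilon)\big]=0$, yielding $\lambda_d\of{t+1}=L_d/(\epsilon+\norm{\vec{\Psi}_d\vec{x}\of{t}}_1)$, which is \lineref{score_lam}.

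Finally, \partref{score_ell0} follows by a limiting argument: for $u>0$ one checks $\lim_{\epsilon\to0}\frac{\log(\epsilon+u)-\log\epsilon}{-\log\epsilon}=1$ while the ratio vanishes at $u=0$, so the normalized and shifted log-sum penalty converges pointwise to the weighted $\ell_{1,0}$ penalty \eqref{score_ell0}, and the MM weight $L_d/(\epsilon+\norm{\vec{\Psi}_d\vec{x}\of{t}}_1)$ tends to the $\epsilon=0$ \score update wherever $\norm{\vec{\Psi}_d\vec{x}\of{t}}_1>0$. The main obstacle I anticipate is in \partref{score_em}: because the $\epsilon$-term in \eqref{score_em_prior_x} does not depend on $\vec{x}$, a prior normalized over $\vec{x}$ would absorb the factor $e^{-\lambda_d\epsilon}$ and drop $\epsilon$ from the M-step; recovering \lineref{score_lam} with $\epsilon$ in its denominator requires adopting the over-parameterized expression \eqref{score_em_prior_x} verbatim as the complete-data model, i.e., declaring its $\vec{x}$-normalizer to be $\vec{\lambda}$-independent. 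I would make this convention explicit and then verify that the M-step objective collapses exactly to $\sum_d[L_d\log(\lambda_d/2)-\lambda_d(\norm{\vec{\Psi}_d\vec{x}\of{t}}_1+\epsilon)]$. The complex-valued case, which changes the Laplacian normalizer and produces $C_d=2$, I would defer to \secref{score_complex}.
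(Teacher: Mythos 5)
Your proposal is correct and, for Parts \ref{part:score_ell0}--\ref{part:score_em}, follows essentially the same route as the paper: the same marginalization of $\vec{\lambda}$ under the Gamma/Jeffrey's hyperprior to obtain $-\log p(\vec{x})=\const+\Rlog^D(\vec{x};\epsilon)$, the same point-mass (zero-temperature) restriction of $q(\vec{x})$ in the VEM argument, and the same infinite-reward limiting argument for the $\ell_{1,0}$ interpretation (the paper relegates the normalized version of that limit to \appref{l0}). The one place you genuinely diverge is \partref{score_logsum}: you majorize $\log(\epsilon+\norm{\vec{\Psi}_d\vec{x}}_1)$ directly by its tangent in the scalar variable $u=\norm{\vec{\Psi}_d\vec{x}}_1$, which is valid (the tangent of the concave, increasing $\log(\epsilon+\cdot)$ lies above it for all $u\geq 0$, and the resulting surrogate is convex in $\vec{x}$ since the linear coefficient is positive), whereas the paper first lifts the problem via slack variables $u_{d,l}\geq|\vec{\psi}_{d,l}\tran\vec{x}|$ into an explicit DC program over a convex set $\mc{C}$ and then linearizes the concave part $g_2$ there. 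The two yield identical iterations, so your argument suffices for the theorem; what the paper's epigraph formulation buys is the setting needed for \secref{score_conv}, where the asymptotic-stationary-point guarantee of Mairal requires a differentiable $g_2$ with Lipschitz gradient over $\mc{C}$ — a property the composite map $\vec{x}\mapsto\log(\epsilon+\norm{\vec{\Psi}_d\vec{x}}_1)$ does not directly enjoy. Your observation about the $e^{-\lambda_d\epsilon}$ factor in \eqref{score_em_prior_x} is a correct reading of an implicit convention in the paper: the M-step objective \eqref{logpx_score} treats the normalizer of \eqref{score_em_prior_x} as $\vec{\lambda}$-independent, which is exactly the convention you propose to state explicitly, so no gap results.
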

\begin{proof} 
See Sections~\ref{sec:score_logsum}~to~\ref{sec:score_em} below.
\end{proof}

Importantly, the MM interpretation implies convergence (in the sense of an asymptotic stationary point condition) when $\epsilon>0$, as detailed in \secref{score_conv}.

%%%%%%%%%%%%%%%%%%%%%%%%%%%%
\subsection{Log-Sum MM Interpretation of \score} \label{sec:score_logsum}

Consider the optimization problem
\begin{align}
\arg\min_{\vec{x}} 
\gamma\norm{\vec{y}-\vec{\Phi x}}_2^2 +
\Rlog^D(\vec{x};\epsilon)
\label{eq:score_logsum}
\end{align}
with $\Rlog^D$ from \eqref{Rlog}.
Inspired by \cite[\S 2.3]{Candes:JFA:08}, we write \eqref{score_logsum} as
\begin{align}
&\arg\min_{\vec{x},\vec{u}} \gamma\norm{\vec{y}-\vec{\Phi x}}_2 
+ \sum_{d=1}^D L_d
\log\bigg(\epsilon + \sum_{l=1}^{L_d}u_{d,l}\bigg) 
\nonumber\\&\quad
\st |\vec{\psi}_{d,l}\tran\vec{x}|\leq u_{d,l} ~\forall d,l 
\label{eq:score_logsum2} ,
\end{align}
where $\vec{\psi}_{d,l}\tran$ is the $l$th row of $\vec{\Psi}_d$.
Problem \eqref{score_logsum2} is of the form
\begin{align}
\arg\min_{\vec{v}} g(\vec{v}) 
\st \vec{v}\in\mc{C},
\label{eq:convex}
\end{align}
where 
$\vec{v}=[\vec{u}\tran,\vec{x}\tran]\tran$,
$\mc{C}$ is a convex set, 
\begin{align} 
g(\vec{v})
= \gamma \big\| \vec{y}-[\vec{0}~\vec{\Phi}]\vec{v}\big\|_2^2 
+ \sum_{d=1}^D L_d \log\bigg(\epsilon +\sum_{k\in\mc{K}_d}v_{k}\bigg) 
\label{eq:score_g}
\end{align} 
is a non-convex penalty, and
the set $\mc{K}_d\defn \{k: \sum_{i=1}^{d-1}L_i < k \leq \sum_{i=1}^d L_i\}$ contains the indices $k$ such that $v_k\in\{u_{d,l}\}_{l=1}^{L_d}$.

Since $g(\vec{v})$ is the sum of convex and concave terms, i.e., a ``difference of convex'' (DC) functions, \eqref{convex} can be recognized as a DC program \cite{Horst:JOTA:99}.
Majorization-minimization (MM) \cite{Hunter:AS:04,Mairal:ICML:13} is a popular method to attack non-convex problems of this form.
In particular, MM iterates the following two steps: (i) construct a surrogate $g(\vec{v};\vec{v}\of{t})$ that majorizes $g(\vec{v})$ at $\vec{v}\of{t}$, and (ii) update $\vec{v}\of{t+1}=\arg\min_{\vec{v}\in\mc{C}} g(\vec{v};\vec{v}\of{t})$.
By ``majorize,'' we mean that $g(\vec{v};\vec{v}\of{t}) \geq g(\vec{v})$ for all $\vec{v}$ with equality when $\vec{v}=\vec{v}\of{t}$.

Due to the DC form of $g(\vec{v})$ in \eqref{score_g}, a majorizing surrogate can be constructed by linearizing the concave term about its tangent at $\vec{v}\of{t}$.
In particular, say $g(\vec{v})=g_1(\vec{v})+g_2(\vec{v})$, where $g_1$ is the convex (quadratic) term and $g_2$ is the concave (log-sum) term, 
and say $\nabla g_2$ is the gradient of $g_2$ w.r.t.\ $\vec{v}$. 
Then
\begin{align}
g(\vec{v};\vec{v}\of{t})
\defn g_1(\vec{v}) + g_2(\vec{v}\of{t}) + \nabla g_2(\vec{v}\of{t})\tran [\vec{v}-\vec{v}\of{t}]
\end{align}
majorizes $g(\vec{v})$ at $\vec{v}\of{t}$, and so the MM iterations become
\begin{align}
\vec{v}\of{t+1}
&= \arg\min_{\vec{v}\in\mc{C}} g_1(\vec{v}) + \nabla g_2(\vec{v}\of{t})\tran\vec{v} 
\label{eq:mm}
\end{align}
after neglecting the $\vec{v}$-invariant terms.

Examining the log-sum term in \eqref{score_g}, we see that 
\begin{align}
[\nabla g_2(\vec{v}\of{t})]_k = 
\begin{cases}
\displaystyle
\frac{L_{d(k)}}{\epsilon + \sum_{i\in\mc{K}_{d(k)}} v_{i}\of{t}} & \text{if $d(k)\neq 0$} \\
0 & \text{else},
\end{cases}
\label{eq:score_grad}
\end{align}
where $d(k)$ is the index $d\in\{1,...,D\}$ of the set $\mc{K}_d$ containing $k$, or $0$ if no such set exists.
Thus MM prescribes
\begin{align}
\vec{v}\of{t+1}
&= \arg\min_{\vec{v}\in\mc{C}} 
\gamma \big\| \vec{y}-[\vec{0}~\vec{\Phi}]\vec{v}\big\|_2^2 + 
\sum_{d=1}^D \sum_{k\in\mc{K}_d} \frac{L_d v_k}{\epsilon+\sum_{i\in\mc{K}_d} v_{i}\of{t}} ,
\end{align}
or equivalently
\begin{align}
\vec{x}\of{t+1}
&= \arg\min_{\vec{x}} 
\gamma \norm{\vec{y}-\vec{\Phi x}}_2^2 + 
\sum_{d=1}^D \frac{L_d \sum_{l=1}^{L_d} |\vec{\psi}_{d,l}\tran\vec{x}|}{\epsilon+\sum_{l=1}^{L_d} |\vec{\psi}_{d,l}\tran\vec{x}\of{t}|} \\
&= \arg\min_{\vec{x}} 
\gamma \norm{\vec{y}-\vec{\Phi x}}_2^2 + 
\sum_{d=1}^D \lambda_d\of{t+1} \norm{\vec{\Psi}_d\vec{x}}_1 
\end{align}
for
\begin{align}
\lambda_d\of{t+1}
&=\frac{L_d}{\epsilon+\norm{\vec{\Psi}_d\vec{x}\of{t}}_1} ,
\end{align}
which coincides with \algref{score}. 
This establishes \partref{score_logsum} of \thmref{score}.

%%%%%%%%%%%%%%%%%%%%%%%%%%%%
\subsection{Convergence of \score} \label{sec:score_conv}

The paper \cite{Mairal:ICML:13} studies the convergence of MM, and includes a special discussion of the application of MM to DC programming. 
In the language of our \secref{score_logsum},
\cite{Mairal:ICML:13} establishes that, when $g_2$ is differentiable with a Lipschitz continuous gradient, the MM sequence $\{\vec{v}\of{t}\}_{t \geq 1}$ satisfies an asymptotic stationary point (ASP) condition. 
Although this falls short of establishing convergence to a local minimum (which is difficult for generic non-convex problems), the ASP condition is based on a classical necessary condition for a local minimum. 
In particular, using $\nabla g(\vec{v};\vec{d})$ to denote the directional derivative of $g$ at $\vec{v}$ in the direction $\vec{d}$, it is known \cite{Borwein:Book:06} that $\vec{v}_\star$ locally minimizes $g$ over $\mc{C}$ only if $\nabla g(\vec{v}_\star;\vec{v}-\vec{v}_\star)\geq 0$ for all $\vec{v}\in\mc{C}$.
Thus, in \cite{Mairal:ICML:13}, it is said that $\{\vec{v}\of{t}\}_{t\geq 1}$ satisfies an ASC condition if
\begin{align}
\liminf_{t\rightarrow +\infty} \inf_{\vec{v}\in\mc{C}}
\frac{\nabla g(\vec{v}\of{t}; \vec{v}-\vec{v}\of{t})}{\|\vec{v}-\vec{v}\of{t}\|_2} \geq 0 .
\label{eq:ASP}
\end{align}

In our case, $g_2$ from \eqref{score_g} is indeed differentiable, with gradient $\nabla g_2$ given by \eqref{score_grad}. 
Moreover, \appref{score_lipschitz} shows that this gradient is Lipschitz continuous when $\epsilon>0$. 
Thus, the sequence of estimates produced by \algref{score} satisfies the ASP condition \eqref{ASP}.

%%%%%%%%%%%%%%%%%%%%%%%%%%%%
\subsection{Approximate \texorpdfstring{$\ell_{1,0}$}{L10} Interpretation of \score} \label{sec:score_ell0}

In the limit of $\epsilon\rightarrow 0$, the log-sum minimization 
\begin{align}
\arg\min_{\vec{x}}
\gamma \norm{\vec{y}-\vec{\Phi x}}_2^2 + 
\sum_{n=1}^N \log(\epsilon+|x_n|)
\label{eq:logsum}
\end{align}
for $\gamma>0$ is known \cite{Wipf:JSTSP:10} to be equivalent to $\ell_0$ minimization 
\begin{align}
\arg\min_{\vec{x}} 
\gamma' \norm{\vec{y}-\vec{\Phi x}}_2^2 + 
\|\vec{x}\|_0
\label{eq:l0}
\end{align}
for some $\gamma'>0$.  (See \appref{l0} for a proof.)
This equivalence can be seen intuitively as follows. 
As $\epsilon\rightarrow 0$, the contribution to the regularization term $\sum_{n=1}^N \log(\epsilon+|x_n|)$ from each non-zero $x_n$ remains finite, while that from each zero-valued $x_n$ approaches $-\infty$.
Since we are interested in minimizing the regularization term, we get a huge reward for each zero-valued $x_n$, or---equivalently---a huge penalty for each non-zero $x_n$.

To arrive at an $\ell_0$ interpretation of the \score algorithm, we consider the corresponding optimization problem \eqref{score_logsum} in the limit that $\epsilon\rightarrow 0$. 
There we see that the regularization term $\Rlog^D(\vec{x};0)$ from \eqref{Rlog} yields $L_d$ huge rewards when $\|\vec{\Psi}_d\vec{x}\|_1\!=\!0$, or equivalently $L_d$ huge penalties when $\|\vec{\Psi}_d\vec{x}\|_1\neq 0$, for each $d\in\{1,\dots,D\}$.
Thus, we can interpret \score as attempting to solve the optimization problem \eqref{score_ell0}, which is a weighted version of the ``$\ell_{p,q}$ mixed norm'' problem from \cite{Kowalski:ACHA:09} for $p\!=\!1$ and $q\rightarrow0$.
This establishes \partref{score_ell0} of \thmref{score}.

%%%%%%%%%%%%%%%%%%%%%%%%%%%%
\subsection{Bayesian MAP Interpretation of \score} \label{sec:score_bayes}

The MAP estimate \cite{Poor:Book:94} of $\vec{x}$ from $\vec{y}$ is
\begin{align}
\vec{x}\map
&\defn \argmax_{\vec{x}} p(\vec{x}|\vec{y})
= \argmin_{\vec{x}} \big\{ -\log p(\vec{x}|\vec{y}) \big\} 
\label{eq:map_unconstrained0} \\
&= \argmin_{\vec{x}} \big\{ -\log p(\vec{x}) - \log p(\vec{y}|\vec{x}) \big\} 
\label{eq:map_unconstrained} \\
&= \argmin_{\vec{x}} \bigg\{ -\log p(\vec{x}) + \gamma\|\vec{y}-\vec{\Phi x}\|_2^2 \bigg\} 
\label{eq:map} ,
\end{align}
where \eqref{map_unconstrained0} used the monotonicity of $\log$,
\eqref{map_unconstrained} used Bayes rule, and  
\eqref{map} used the AWGN likelihood.
Note that, for real-valued AWGN with $\sigma^2$ variance, $\gamma=\frac{1}{2\sigma^2}$, while for circular complex-valued AWGN with $\sigma^2$ variance, $\gamma=\frac{1}{\sigma^2}$.

Next, we derive the $-\log p(\vec{x})$ term in \eqref{map} that results from the hierarchical prior \eqref{score_bayes_prior_x}-\eqref{score_bayes_prior_lam}.
Recall that, with shape parameter $\kappa$ and scale parameter $\theta$, the Gamma pdf \cite{Berger:Book:85} is 
$\Gamma(\lambda_d;\kappa,\theta)
=1_{\lambda_d>0}\lambda_d^{\kappa-1} \theta^{-\kappa} \exp(-\lambda_d/\theta) / \Gamma(\kappa)$, where $\Gamma(\kappa)$ is the Gamma function.
Since $\Gamma(\lambda_d;\kappa,\theta)\propto 1_{\lambda_d>0}\lambda_d^{\kappa-1}\exp(-\lambda_d/\theta)$, we note that $\Gamma(\lambda_d;0,\infty)\propto 1_{\lambda_d>0}/\lambda_d$, which is Jeffrey's non-informative hyperprior \cite{Berger:Book:85,Figueiredo:TIP:01} for the Laplace scale parameter $\lambda_d$.
Then, according to \eqref{score_bayes_prior_x}-\eqref{score_bayes_prior_lam}, the prior equals
\begin{align}
\lefteqn{
p(\vec{x})
= \int_{\Real^D} p(\vec{x}|\vec{\lambda}) p(\vec{\lambda}) \deriv\vec{\lambda}
}\\
&\propto \prod_{d=1}^D \int_0^\infty \bigg(\frac{\lambda_d}{2}\bigg)^{L_d} 
\exp({-\lambda_d}\|\vec{\Psi}_d\vec{x}\|_1) 
\frac{\exp(-\lambda_d\epsilon)}{\lambda_d} \deriv\lambda_d\\
&= \prod_{d=1}^D \frac{(L_d-1)!}{\big(2(\|\vec{\Psi}_d\vec{x}\|_1+\epsilon)\big)^{L_d}}
\end{align}
which implies that
\begin{align}
-\log p(\vec{x})
&= \const + \sum_{d=1}^D L_d \log \big(\|\vec{\Psi}_d\vec{x}\|_1+\epsilon\big) .
\label{eq:score_bayes_logprior_x} 
\end{align}

Equations \eqref{map}, \eqref{score_bayes_logprior_x}, and \eqref{Rlog} imply
\begin{align}
\vec{x}\map 
&= \argmin_{\vec{x}} 
\gamma\|\vec{y}-\vec{\Phi x}\|_2^2 + 
\Rlog^D(\vec{x};0) .
\end{align}
Finally, applying the MM algorithm to this optimization problem (as detailed in \secref{score_logsum}), we arrive at 
\algref{score}.
We note that \cite{Oliveira:SP:09} proposed to use Gamma and Jeffrey's hyperpriors with MM for total-variation image deblurring, although their algorithm is not of the IRW-L1 form.
This establishes \partref{score_bayes} of \thmref{score}.

%%%%%%%%%%%%%%%%%%%%%%%%%%%%
\subsection{Variational EM Interpretation of \score} \label{sec:score_em}

The variational expectation-maximization (VEM) algorithm \cite{Neal:Jordan:98,Bishop:Book:07} is an iterative approach to maximum-likelihood (ML) estimation that generalizes the EM algorithm from \cite{Dempster:JRSS:77}.
We now provide a brief review of the VEM algorithm and describe how it can be applied to estimate $\vec{\lambda}$ in \eqref{score_em_prior_x}. 

First, note that the log-likelihood can be written as
\begin{align}
\lefteqn{ \log p(\vec{y};\vec{\lambda})
= \int q(\vec{x}) \log p(\vec{y};\vec{\lambda}) \deriv\vec{x} } \\
&= \int q(\vec{x}) \log \left[
        \frac{p(\vec{x},\vec{y};\vec{\lambda})}{q(\vec{x})}
        \frac{q(\vec{x})}{p(\vec{x}|\vec{y};\vec{\lambda})} \right]
        \deriv\vec{x} \\
&= \underbrace{
   \int q(\vec{x}) \log \frac{p(\vec{x},\vec{y};\vec{\lambda})}{q(\vec{x})}
        \deriv\vec{x} 
   }_{\displaystyle \defn F\big(q(\vec{x});\vec{\lambda}\big)}
   + \underbrace{
   \int q(\vec{x}) \log \frac{q(\vec{x})}{p(\vec{x}|\vec{y};\vec{\lambda})}
        \deriv\vec{x} 
   }_{\displaystyle \defn 
        \Dkl\big( q(\vec{x}) \big\| p(\vec{x}|\vec{y};\vec{\lambda}) \big)
   } 
   \label{eq:free} ,
\end{align}
for an arbitrary pdf $q(\vec{x})$, where $\Dkl(q\|p)$ denotes the KL divergence of $p$ from $q$.
Because $\Dkl(q\|p)\geq 0$ for any $q$ and $p$, we see that $F(q(\vec{x});\vec{\lambda})$ is a lower bound on $\log p(\vec{y};\vec{\lambda})$.
The EM algorithm performs ML estimation by iterating
\begin{align}
q\of{t}(\vec{x})
&= \arg\min_{q} \Dkl\big( q(\vec{x}) \big\| p(\vec{x}|\vec{y};\vec{\lambda}\of{t}) \big)
\label{eq:Estep} \\
\vec{\lambda}\of{t+1}
&= \arg\max_{\vec{\lambda}} F(q\of{t}(\vec{x});\vec{\lambda}) 
\label{eq:Mstep} ,
\end{align}
where the ``E'' step \eqref{Estep} tightens the lower bound and the ``M'' step \eqref{Mstep} maximizes the lower bound. 

The EM algorithm places no constraints on $q(\vec{x})$, in which case the solution to \eqref{Estep} is simply $q\of{t}(\vec{x})=p(\vec{x}|\vec{y};\vec{\lambda}\of{t})$, i.e., the posterior pdf of $\vec{x}$ under $\vec{\lambda}=\vec{\lambda}\of{t}$.
In many applications, however, this posterior is too difficult to compute and/or use in \eqref{Mstep}.
To circumvent this problem, the VEM algorithm constrains $q(\vec{x})$ to some family of distributions $\mc{Q}$ that makes \eqref{Estep}-\eqref{Mstep} tractable.

For our application of the VEM algorithm, we constrain to distributions of the form
\begin{align}
q(\vec{x})
&\propto \lim_{T\rightarrow 0} \exp \Big( \tfrac{1}{T} \log p(\vec{x}|\vec{y};\vec{\lambda}) \Big) ,
\end{align}
which has the effect of concentrating the mass in $q(\vec{x})$ at its mode.
Plugging this $q(\vec{x})$ and $p(\vec{x},\vec{y};\vec{\lambda})=p(\vec{y}|\vec{x})p(\vec{x};\vec{\lambda})$ into \eqref{free}, we see that the M step \eqref{Mstep} reduces to
\begin{align}
\vec{\lambda}\of{t+1}
&= \arg\max_{\vec{\lambda}} \log p(\vec{x};\vec{\lambda})\barst{\vec{x}=\vec{x}\map\of{t}}
\label{eq:Mstep_score} \\
\text{for~~}
\vec{x}\map\of{t}
&\defn \arg\max_{\vec{x}} p(\vec{x}|\vec{y};\vec{\lambda}\of{t})
\label{eq:Estep_score} ,
\end{align}
where \eqref{Estep_score} be interpreted as the E step.
For the particular $p(\vec{x};\vec{\lambda})$ in \eqref{score_em_prior_x}, we have  that
\begin{align}
\log p(\vec{x};\vec{\lambda}) 
&= \const + \sum_{d=1}^D \big[ L_d \log(\lambda_d)-\lambda_d (\|\vec{\Psi}_d\vec{x}\|_1 +\epsilon)\big] ,
\label{eq:logpx_score}
\end{align}
and by zeroing the gradient w.r.t.\ $\vec{\lambda}$, we find that \eqref{Mstep_score} becomes
\begin{align}
\lambda_d\of{t+1}
= \frac{L_d}{\big\|\vec{\Psi}_d\vec{x}\of{t}\map\big\|_1+\epsilon},~~d=1,\dots,D.
\label{eq:Mstep_score2}
\end{align}
Meanwhile, from \eqref{map} and \eqref{score_em_prior_x}, we find that \eqref{Estep_score} becomes
\begin{align}
\vec{x}\of{t}\map
&= \argmin_{\vec{x}} 
\gamma\|\vec{y}-\vec{\Phi x}\|_2^2 + 
\sum_{d=1}^D \lambda_d\of{t} \|\vec{\Psi}_d\vec{x}\|_1 
\label{eq:Estep_score2} .
\end{align}

In conclusion, our VEM algorithm iterates the steps \eqref{Mstep_score2}-\eqref{Estep_score2}, which match the steps in \algref{score}. 
This establishes \partref{score_em} of \thmref{score}.

%%%%%%%%%%%%%%%%%%%%%%%%%%%%
\subsection{\score for Complex-Valued \texorpdfstring{$\vec{\Psi}_d \vec{x}$}{Analysis Outputs}} \label{sec:score_complex}

In \thmref{score} and Sections~\ref{sec:score_logsum}-\ref{sec:score_em}, real-valued analysis outputs $\vec{\Psi}_d \vec{x}$ were assumed for ease of explanation.
We now extend the previous results to the case of complex-valued $\vec{\Psi}_d \vec{x}$.
For this, we focus on the VEM interpretation (recall \partref{score_em} of \thmref{score}), noting that a similar justification can be made based on the Bayesian MAP interpretation.
In particular, we 
assume an AWGN likelihood and a complex-valued extension of the prior \eqref{score_em_prior_x}:
\begin{align}
p(\vec{x};\vec{\lambda}) 
&\propto \prod_{d=1}^D \bigg(\frac{\lambda_d}{2\pi}\bigg)^{2L_d} \exp\big({-\lambda_d} (\|\vec{\Psi}_d\vec{x}\|_1+\epsilon)\big) ,
\label{eq:score_em_prior_x_cmplx} 
\end{align}
which, when $\epsilon=0$, is i.i.d.\ complex-valued Laplacian on $\vec{z}_d\!=\!\vec{\Psi}_d\vec{x}\in\Complex^{L_d}$ with deterministic scale parameter $\lambda_d>0$. 
To show this, we follow the steps in \secref{score_em} up to the log-prior in \eqref{logpx_score}, which now becomes
\begin{align}
\log p(\vec{x};\vec{\lambda}) 
&= \const + \sum_{d=1}^D \big[ 2 L_d \log(\lambda_d)-\lambda_d (\|\vec{\Psi}_d\vec{x}\|_1+\epsilon) \big] . \quad
\label{eq:logpx_score_cmplx}
\end{align}
Zeroing the gradient w.r.t.\ $\vec{\lambda}$, we find that the VEM update in \eqref{Mstep_score} becomes
\begin{align}
\lambda_d\of{t+1}
= \frac{2L_d}{\big\|\vec{\Psi}_d\vec{x}\of{t}\map\big\|_1+\epsilon},~~d=1,\dots,D,
\label{eq:Mstep_score2_cmplx}
\end{align}
which is twice as large as the real-valued case in \eqref{Mstep_score2}.

%%%%%%%%%%%%%%%%%%%%%%%%%%%%
\subsection{New Interpretations of the \IRW Algorithm} \label{sec:irw}

The proposed \score algorithm is related to the analysis-CS formulation 
of the well-known \IRW algorithm \cite{Figueiredo:TIP:07}. 
For clarity, and for later use in \secref{scoreirw}, we summarize this latter algorithm in \algref{irw}, and note that the synthesis-CS formulation follows from the special case that $\vec{\Psi}=\vec{I}$.

%--------------------
\begin{algorithm}
  \caption{The \IRW Algorithm}
  \label{alg:irw}
  \begin{algorithmic}[1]
    \State
    \textsf{input:~~} $\vec{\Psi}=[\vec{\psi}_1,\dots,\vec{\psi}_L]\tran$, $\vec{\Phi}$, $\vec{y}$, 
    $\gamma\geq 0$,
    $\epsilon\geq 0$\\
    \textsf{initialization:~~} $\vec{W}\of{1} = \vec{I}$
    \State
    \textsf{for~} $t=1,2,3,\dots$\\
    \quad $\vec{x}\of{t} \gets \displaystyle \argmin_{\vec{x}} 
        \gamma \|\vec{y} - \vec{\Phi x}\|^2_2 +
        \|\vec{W}\of{t}\vec{\Psi}\vec{x}\|_{1}$ 
    \\
    \quad $\vec{W}\of{t+1} \gets \diag\left\{ \displaystyle
        \frac{1}{\epsilon + |\vec{\psi}_1\tran \vec{x}\of{t}|},
        \cdots, 
        \frac{1}{\epsilon + |\vec{\psi}_L\tran \vec{x}\of{t}|}
        \right\}$\\
    \textsf{end}\\
     \textsf{output:~~}$\vec{x}\of{t}$
  \end{algorithmic}
\end{algorithm}
%--------------------

Comparing \algref{irw} to \algref{score}, we see that \IRW coincides with real-valued \score in the case that every sub-dictionary $\vec{\Psi}_d$ has dimension one, i.e., $C_d\!=\!1\!=\!L_d\!~\forall d$ and $D\!=\!L$,
where $L\defn\sum_{d=1}^D L_d$ denotes the total number of analysis coefficients.
Thus, the \score interpretations from \thmref{score} can be directly translated to \IRW as follows.

\begin{corollary}[\IRW]\label{cor:irw}
The \IRW algorithm from \algref{irw} has the following interpretations:
\begin{enumerate}
\item \label{part:irw_logsum}
MM applied to \eqref{opt} under the log-sum penalty 
\begin{align}
\Rlog^L(\vec{x};\epsilon) 
&= \sum_{l=1}^L \log( \epsilon +|\vec{\psi}_l\tran \vec{x}| ) ,
\label{eq:RlogL}
\end{align}
recalling the definition of $\Rlog^L$ from \eqref{Rlog},

\item \label{part:irw_ell0}
as $\epsilon\rightarrow 0$,
MM applied to \eqref{opt} under the $\ell_{0}$ penalty
\begin{align}
R_{0}^L(\vec{x})
&\defn \sum_{l=1}^L 1_{|\vec{\psi}_l\tran\vec{x}|>0}
\label{eq:irw_ell0} ,
\end{align}

\item \label{part:irw_bayes}
MM applied to Bayesian MAP estimation under an AWGN likelihood and the hierarchical prior
\begin{align}
p(\vec{x}|\vec{\lambda}) 
&= \prod_{l=1}^L \frac{\lambda_l}{2} \exp\big({-\lambda_l} |\vec{\psi}_l\tran\vec{x}|\big)
\label{eq:irw_bayes_prior_x} \\
\vec{\lambda} &\sim \text{i.i.d.~}\Gamma(0,\epsilon^{-1})
\label{eq:irw_bayes_prior_lam} 
\end{align}
where $z_l\!=\!\vec{\psi}_l\tran\vec{x}$ is Laplacian given $\lambda_l$, and $\lambda_l$ is Gamma distributed with scale parameter $\epsilon^{-1}$ and shape parameter zero, which becomes
Jeffrey's non-informative hyperprior $p(\lambda_l)\propto 1_{\lambda_l>0}/\lambda_l$ when $\epsilon=0$.  

\item \label{part:irw_em}
variational EM under an AWGN likelihood and the prior
\begin{align}
p(\vec{x};\vec{\lambda}) 
&\propto \prod_{l=1}^L \frac{\lambda_l}{2}\exp\big({-\lambda_l} (|\vec{\psi}_l\tran\vec{x}|+\epsilon)\big) .
\label{eq:irw_em_prior_x} 
\end{align}
which, when $\epsilon=0$, is independent Laplacian on $\vec{z}\!=\!\vec{\Psi x}\in\Real^L$ under the positive deterministic scale parameters in $\vec{\lambda}$.
\end{enumerate}
\end{corollary}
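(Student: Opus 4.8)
The plan is to establish \corref{irw} as a direct specialization of \thmref{score} to the degenerate case in which each sub-dictionary $\vec{\Psi}_d$ consists of a single, real-valued row. Concretely, I would set $L_d\!=\!1$ and $C_d\!=\!1$ for all $d$, take $D\!=\!L$, and identify the $d$th sub-dictionary $\vec{\Psi}_d$ with the $l$th row $\vec{\psi}_l\tran$ of $\vec{\Psi}$, so that $d$ and $l$ index the same object. Under this identification the composite regularizer \eqref{Rell} collapses to $\sum_{l=1}^L \lambda_l |\vec{\psi}_l\tran\vec{x}|$, which is exactly $\|\vec{W}\vec{\Psi x}\|_1$ for the diagonal matrix $\vec{W}\!=\!\diag\{\lambda_1,\dots,\lambda_L\}$; hence \lineref{score_x} of \algref{score} coincides with the corresponding minimization in \algref{irw}. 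Similarly, the weight update \lineref{score_lam}, namely $\lambda_d\of{t+1}\!=\!C_d L_d/(\epsilon+\|\vec{\Psi}_d\vec{x}\of{t}\|_1)$, reduces to $\lambda_l\of{t+1}\!=\!1/(\epsilon+|\vec{\psi}_l\tran\vec{x}\of{t}|)$, which is precisely the $l$th diagonal entry of $\vec{W}\of{t+1}$ in \algref{irw}. This confirms that the two algorithms generate identical iterates.

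Given this algorithmic equivalence, each of the four interpretations then follows by substituting $L_d\!=\!1$, $C_d\!=\!1$, $D\!=\!L$ into the corresponding statement of \thmref{score} and rewriting the per-$d$ quantities as per-$l$ quantities. For \partref{irw_logsum}, the log-sum penalty \eqref{Rlog} with $L_d\!=\!1$ becomes $\sum_{l=1}^L\log(\epsilon+|\vec{\psi}_l\tran\vec{x}|)$, which is \eqref{RlogL}. For \partref{irw_ell0}, the weighted penalty \eqref{score_ell0} reduces to $\sum_{l=1}^L 1_{|\vec{\psi}_l\tran\vec{x}|>0}$, i.e., the $\ell_0$ penalty \eqref{irw_ell0}, and the $\epsilon\!\rightarrow\!0$ limiting argument of \secref{score_ell0} carries over verbatim. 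For \partref{irw_bayes} and \partref{irw_em}, the hierarchical prior \eqref{score_bayes_prior_x}--\eqref{score_bayes_prior_lam} and the VEM prior \eqref{score_em_prior_x} factor into $L$ scalar terms, yielding \eqref{irw_bayes_prior_x}--\eqref{irw_bayes_prior_lam} and \eqref{irw_em_prior_x}, respectively; the MM/MAP and VEM derivations of \secref{score_bayes} and \secref{score_em} then apply unchanged.

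The only point requiring care is notational rather than mathematical: \algref{irw} expresses the reweighting through a single diagonal matrix $\vec{W}\of{t}$ absorbed into the $\ell_1$ norm, whereas \algref{score} uses explicit scalar weights $\lambda_d\of{t}$ placed outside the norm. I would make the correspondence $[\vec{W}\of{t}]_{l,l}\!=\!\lambda_l\of{t}$ explicit and verify that $\|\vec{W}\of{t}\vec{\Psi x}\|_1\!=\!\sum_{l=1}^L\lambda_l\of{t}|\vec{\psi}_l\tran\vec{x}|$, so that the two formulations are interchangeable. I do not expect any genuine obstacle, since the real-valued assumption ($C_d\!=\!1$) matches the setting already treated in \thmref{score} and no complex-valued extension is needed here.
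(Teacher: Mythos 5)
Your proposal is correct and follows essentially the same route as the paper, which likewise obtains \corref{irw} by observing that \IRW coincides with real-valued \score when $C_d=1=L_d$ for all $d$ and $D=L$, and then directly specializing the four interpretations of \thmref{score}. Your additional verification that $[\vec{W}\of{t}]_{l,l}=\lambda_l\of{t}$ reconciles the weight-inside-the-norm versus weight-outside-the-norm notations is a useful explicit check that the paper leaves implicit.
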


While \partref{irw_logsum} and \partref{irw_ell0} of \corref{irw} were established for the $\ell_2$-constrained synthesis-CS formulation of \IRW in \cite{Candes:JFA:08}, we believe that \partref{irw_bayes} and \partref{irw_em} are novel interpretations of \IRW.

%%%%%%%%%%%%%%%%%%%%%%%%%%%%
\section{The \ascoreirw algorithm} \label{sec:scoreirw}

We now propose the \scoreirw algorithm, which is summarized in \algref{scoreirw}.
\scoreirw can be thought of as a hybrid of the \score and \IRW approaches from Algorithms~\ref{alg:score}~and~\ref{alg:irw}, respectively.
Like with \score, the \scoreirw algorithm uses sub-dictionary dependent weights $\lambda_d$ that are updated at each iteration $t$ using a sparsity metric on $\vec{\Psi}_d\vec{x}\of{t}$.
But, like with \IRW, the \scoreirw algorithm also uses diagonal weight matrices $\vec{W}_d\of{t}$ that are updated at each iteration.
As with both \score and \IRW, the computational burden of \scoreirw is dominated by the L2+L1 minimization problem in \lineref{scoreirw_x} of \algref{scoreirw}, which is readily solved by existing techniques like MFISTA. 
%--------------------
\begin{algorithm}
  \caption{The Real-Valued \scoreirw Algorithm}
  \label{alg:scoreirw}
  \begin{algorithmic}[1]
    \State
    \textsf{input:~~} $\{\vec{\Psi}_d\}_{d=1}^D$, $\vec{\Phi}$, $\vec{y}$, 
    $\gamma>0$,
    $\epsilon_d>0~\forall d$, $\varepsilon\geq 0$, \\
    \textsf{initialization:~~} $\lambda_d\of{1}=1~\forall d$,
    $\vec{W}_d\of{1} = \vec{I}~\forall d$
    \State
    \textsf{for~} $t=1,2,3,\dots$\\
    \label{line:scoreirw_x}
    \quad $\vec{x}\of{t} \gets 
        \begin{array}[t]{@{}l}
        \displaystyle \argmin_{\vec{x}} 
        \gamma\|\vec{y}-\vec{\Phi x}\|_2^2 +
        \sum_{d=1}^D \lambda_d\of{t}\|\vec{W}_d\of{t}\vec{\Psi}_d\vec{x} \|_1 \\
        \end{array}$\\
    \quad 
    \label{line:scoreirw_lam} 
    $\lambda_d\of{t+1}
    \gets \begin{array}[t]{@{}l}
        \displaystyle \left[ \frac{1}{L_d}\sum_{l=1}^{L_d}
        \log\bigg(1 +\varepsilon
        +\frac{|\vec{\psi}_{d,l}\tran\vec{x}\of{t}|}{\epsilon_d}
        \bigg) \right]^{-1} + 1, \\
        \forall d=1,...,D 
        \end{array} $ \\
    \label{line:scoreirw_wgt} 
    \quad 
    $\vec{W}_d\of{t+1} \gets 
        \begin{array}[t]{@{}l}
        \displaystyle 
        \diag\bigg\{
        \frac{1}{\epsilon_d(1+\varepsilon)+|\vec{\psi}_{d,1}\tran\vec{x}\of{t}|}, 
        \cdots,
        \\\hspace{10mm} \displaystyle 
        \frac{1}{\epsilon_d(1+\varepsilon)+|\vec{\psi}_{d,L_d}\tran\vec{x}\of{t}|}
        \bigg\},
        ~\forall d
        \end{array}
        $\\
    \textsf{end}\\     
    \textsf{output:~~}$\vec{x}\of{t}$  
  \end{algorithmic}
\end{algorithm}
%--------------------

THE \scoreirw algorithm can be interpreted in various ways, as we detail below.
For clarity, we first consider fixed regularization parameters $\vec{\epsilon}\defn [\epsilon_1,\dots,\epsilon_D]\tran$ and later, in \secref{scoreirw_epsilon}, we describe how they can be adapted at each iteration, leading to the \ascoreirw algorithm.
Also, to simplify the development, we first consider the real-valued case and discuss the complex-valued case later, in \secref{scoreirw_complex}.

\begin{theorem}[\scoreirw] \label{thm:scoreirw}
The real-valued \scoreirw algorithm in \algref{scoreirw} has the following interpretations:
\begin{enumerate}
\item \label{part:scoreirw_logsum}
MM applied to \eqref{opt} under the 
log-sum-log penalty 
\begin{align}
\Rlsl(\vec{x};\vec{\epsilon},\varepsilon)
&\defn \sum_{d=1}^D \sum_{l=1}^{L_d} \log\bigg[
  \big(\epsilon_d(1+\varepsilon) + |\vec{\psi}_{d,l}\tran\vec{x}|\big)
  \nonumber\\&\quad \times
  \sum_{i=1}^{L_d} \log\bigg(1+\varepsilon
        +\frac{|\vec{\psi}_{d,i}\tran\vec{x}|}{\epsilon_d}
        \bigg) \bigg],
  \label{eq:Rlsl} 
\end{align}

\item \label{part:scoreirw_ell0}
as $\varepsilon\rightarrow 0$ and $\epsilon_d\rightarrow 0~\forall d$, 
MM applied to \eqref{opt} under the $\ell_0+\ell_{0,0}$ penalty
\begin{align}
R_{0,00}^D(\vec{x})
&\defn \|\vec{\Psi x}\|_0 + \sum_{d=1}^D L_d \,1_{\|\vec{\Psi}_d\vec{x}\|_0>0}
\label{eq:scoreirw_ell0} ,
\end{align}

\item \label{part:scoreirw_bayes}
MM applied to Bayesian MAP estimation under an AWGN likelihood and the hierarchical prior
\begin{eqnarray}
p(\vec{x}|\vec{\lambda};\vec{\epsilon}) 
&\propto& \prod_{d=1}^D \prod_{l=1}^{L_d}\frac{\lambda_d}{2\epsilon_d} 
        \bigg(1+\varepsilon
        +\frac{|\vec{\psi}_{d,l}\tran\vec{x}|}{\epsilon_d}
        \bigg)^{-(\lambda_d+1)}
\label{eq:scoreirw_bayes_prior_x} \\
p(\vec{\lambda})
&=& \prod_{d=1}^D p(\lambda_d),
~~
p(\lambda_d)\propto \begin{cases}\frac{1}{\lambda_d}&\lambda_d>0\\0&\text{else}\end{cases},
\qquad
\label{eq:scoreirw_bayes_prior_lam} 
\end{eqnarray}
where, when $\varepsilon=0$, the variables $\vec{z}_d\!=\!\vec{\Psi}_d\vec{x}\in\Real^{L_d}$ are i.i.d.\ generalized-Pareto \cite{Cevher:NIPS:09} given $\lambda_d$, and $p(\lambda_d)$ is Jeffrey's non-informative hyperprior \cite{Berger:Book:85,Figueiredo:TIP:01} for the random shape parameter $\lambda_d$.

\item \label{part:scoreirw_em}
variational EM under an AWGN likelihood and the prior
\begin{align}
p(\vec{x};\vec{\lambda},\vec{\epsilon}) 
&\propto \prod_{d=1}^D \prod_{l=1}^{L_d}\frac{\lambda_d-1}{2\epsilon_d} 
        \bigg(1 +\varepsilon
        +\frac{|\vec{\psi}_{d,l}\tran\vec{x}|}{\epsilon_d}
        \bigg)^{-\lambda_d}
\label{eq:scoreirw_em_prior_x} 
\end{align}
where, when $\varepsilon=0$, the variables $\vec{z}_d\!=\!\vec{\Psi}_d\vec{x}\in\Real^{L_d}$ are i.i.d.\ generalized-Pareto with deterministic shape parameter $\lambda_d>1$ and scale parameter $\epsilon_d>0$. 
\end{enumerate}
\end{theorem}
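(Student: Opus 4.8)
The plan is to mirror the four-part proof of \thmref{score}, treating the log-sum-log MM interpretation (\partref{scoreirw_logsum}) as the core result to which the remaining three parts reduce. Throughout I abbreviate $a_{d,l}\defn|\vec{\psi}_{d,l}\tran\vec{x}|$ and $B_d\defn\sum_{l=1}^{L_d}\log(1+\varepsilon+a_{d,l}/\epsilon_d)$, and I use the identity $\log(\epsilon_d(1+\varepsilon)+a_{d,l})=\log\epsilon_d+\log(1+\varepsilon+a_{d,l}/\epsilon_d)$ to rewrite \eqref{Rlsl}, up to an $\vec{x}$-invariant constant, as $\Rlsl=\sum_d B_d+\sum_d L_d\log B_d$.

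For \partref{scoreirw_logsum} I would follow \secref{score_logsum}: introduce epigraph variables $u_{d,l}\geq a_{d,l}$, put the problem in the DC form \eqref{convex} with convex data term and penalty $\sum_{d,l}\log(\epsilon_d(1+\varepsilon)+u_{d,l})+\sum_d L_d\log B_d$, and verify concavity of this penalty in $\vec{u}$. The first sum is concave because $\log$ is; the second is concave because each $\log(1+\varepsilon+u_{d,l}/\epsilon_d)$ is concave, so $B_d$ is concave and positive, and composing the concave increasing $\log(\cdot)$ with $B_d$ preserves concavity. Linearizing the concave penalty about $\vec{x}\of{t}$ as in \eqref{mm}--\eqref{score_grad}, the coefficient multiplying $a_{d,l}$ in the majorizer is the sum of the two gradient contributions
\begin{align}
\frac{1}{\epsilon_d(1+\varepsilon)+a_{d,l}\of{t}} &+ \frac{L_d}{B_d\of{t}}\cdot\frac{1}{\epsilon_d(1+\varepsilon)+a_{d,l}\of{t}} \nonumber\\
&= \Big(1+\tfrac{L_d}{B_d\of{t}}\Big)[\vec{W}_d\of{t+1}]_{l,l} = \lambda_d\of{t+1}[\vec{W}_d\of{t+1}]_{l,l},\nonumber
\end{align}
so that, after eliminating $\vec{u}$, the surrogate minimization is exactly $\gamma\norm{\vec{y}-\vec{\Phi x}}_2^2+\sum_d\lambda_d\of{t+1}\norm{\vec{W}_d\of{t+1}\vec{\Psi}_d\vec{x}}_1$, i.e.\ \lineref{scoreirw_x} with the updates of Lines~\ref{line:scoreirw_lam}--\ref{line:scoreirw_wgt}. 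Lipschitz continuity of the gradient (as in \appref{score_lipschitz}) then furnishes the ASP guarantee of \secref{score_conv}.

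\partref{scoreirw_ell0} follows by applying the heuristic of \secref{score_ell0} to the two pieces of $\Rlsl$: as $\epsilon_d\to0$, $\sum_{d,l}\log(\epsilon_d(1+\varepsilon)+a_{d,l})$ awards a diverging reward to each $a_{d,l}=0$ and a finite value to each $a_{d,l}>0$, counting $\norm{\vec{\Psi x}}_0$; and as $\varepsilon,\epsilon_d\to0$ each summand of $B_d$ tends to $0$ when $a_{d,l}=0$ and to $+\infty$ otherwise, so $B_d\to0^+$ precisely when $\norm{\vec{\Psi}_d\vec{x}}_0=0$, whence $\sum_d L_d\log B_d$ counts $\sum_d L_d\,1_{\norm{\vec{\Psi}_d\vec{x}}_0>0}$, giving \eqref{scoreirw_ell0}. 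For \partref{scoreirw_bayes} I would marginalize \eqref{scoreirw_bayes_prior_x}--\eqref{scoreirw_bayes_prior_lam} over $\vec{\lambda}$: the per-$d$ factor is $(\lambda_d/2\epsilon_d)^{L_d}\exp(-(\lambda_d+1)B_d)$, and integrating against Jeffrey's $p(\lambda_d)\propto1/\lambda_d$ leaves the Gamma integral $\int_0^\infty\lambda_d^{L_d-1}e^{-\lambda_d B_d}\deriv\lambda_d=(L_d-1)!/B_d^{L_d}$, so $p(\vec{x})\propto\prod_d e^{-B_d}/B_d^{L_d}$ and $-\log p(\vec{x})=\const+\sum_d B_d+\sum_d L_d\log B_d=\const+\Rlsl$. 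Then \eqref{map} reduces the MAP problem to \eqref{opt} under $\Rlsl$, and \partref{scoreirw_logsum} supplies the MM steps.

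For \partref{scoreirw_em} I would follow \secref{score_em}. The log-prior \eqref{scoreirw_em_prior_x} equals $\const+\sum_d[L_d\log(\lambda_d-1)-\lambda_d B_d]$, so zeroing the M-step gradient gives $L_d/(\lambda_d-1)=B_d\of{t}$, i.e.\ $\lambda_d\of{t+1}=1+L_d/B_d\of{t}$, matching \lineref{scoreirw_lam}. The main obstacle is the E-step: unlike the Laplacian case of \secref{score_em}, the E-step MAP problem $\argmin_{\vec{x}}\gamma\norm{\vec{y}-\vec{\Phi x}}_2^2+\sum_d\lambda_d\of{t}\sum_l\log(1+\varepsilon+a_{d,l}/\epsilon_d)$ is nonconvex and does not reduce directly to the reweighted-L1 step of \lineref{scoreirw_x}. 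I would resolve this by solving the E-step with a single inner MM (IRW) step that linearizes each $\log(1+\varepsilon+a_{d,l}/\epsilon_d)$ about $\vec{x}\of{t-1}$; the induced weight on $a_{d,l}$ is $\lambda_d\of{t}/(\epsilon_d(1+\varepsilon)+a_{d,l}\of{t-1})=\lambda_d\of{t}[\vec{W}_d\of{t}]_{l,l}$, which is exactly the objective of \lineref{scoreirw_x} with $\vec{W}_d\of{t}$ from \lineref{scoreirw_wgt}, so the resulting partial-VEM recursion coincides with \algref{scoreirw}. Finally, the generalized-Pareto identifications in \partref{scoreirw_bayes}--\partref{scoreirw_em} follow by inspection at $\varepsilon=0$, where the per-coordinate densities $\propto(1+|z|/\epsilon_d)^{-(\lambda_d+1)}$ and $\propto(1+|z|/\epsilon_d)^{-\lambda_d}$ are the stated forms and the factor $(\lambda_d-1)/(2\epsilon_d)$ in \eqref{scoreirw_em_prior_x} is verified as the normalizer for $\lambda_d>1$.
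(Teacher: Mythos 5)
Your proposal is correct and follows essentially the same route as the paper's proof: the same DC/MM decomposition of $\Rlsl$ (your identity $\Rlsl=\const+\sum_d B_d+\sum_d L_d\log B_d$ is just a compact restatement of the paper's \eqref{scoreirw_g}), the same gradient yielding the weights $\lambda_d\of{t+1}[\vec{W}_d\of{t+1}]_{l,l}$, the same two-term limiting argument for the $\ell_0+\ell_{0,0}$ interpretation, the same Gamma-integral marginalization over Jeffrey's hyperprior, and the same resolution of the nonconvex E-step via a single inner MM linearization. The only cosmetic additions are your explicit concavity check of the surrogate and the normalization check of the generalized-Pareto density, both of which the paper leaves implicit.
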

\begin{proof} 
See Sections~\ref{sec:scoreirw_logsum}~to~\ref{sec:scoreirw_em} below.
\end{proof}

As with \score, the MM interpretation implies convergence (in the sense of an asymptotic stationary point condition) when $\varepsilon>0$, as detailed in \secref{scoreirw_conv}.

%%%%%%%%%%%%%%%%%%%%%%%%%%%%
\subsection{Log-Sum-Log MM Interpretation of \scoreirw} \label{sec:scoreirw_logsum}

Consider the optimization problem
\begin{align}
\arg\min_{\vec{x}} 
\gamma \|\vec{y}-\vec{\Phi x}|_2^2 +
\Rlsl(\vec{x};\vec{\epsilon},\varepsilon)
\label{eq:scoreirw_logsum}
\end{align}
with $\Rlsl$ defined in \eqref{Rlsl}.
We attack this optimization problem using the MM approach detailed in \secref{score_logsum}.
The difference is that now the function $g_2$ is defined as
\begin{align} 
\lefteqn{
g_2(\vec{v})
}\nonumber\\
&=\sum_{d=1}^D \sum_{k\in\mc{K}_d} \log\bigg[
  \big(\epsilon_d(1+\varepsilon) + v_k \big)
  \sum_{i\in\mc{K}_d} \log\bigg(1 +\varepsilon
        +\frac{v_i}{\epsilon_d}
        \bigg) \bigg] \\
&= \sum_{d=1}^D \left[ 
  L_d \log 
  \sum_{i\in\mc{K}_d} \log\bigg(1 +\varepsilon
        +\frac{v_i}{\epsilon_d}
        \bigg) 
\right.\nonumber\\&\quad\left.
  + \sum_{k\in\mc{K}_d} 
  \log \big(\epsilon_d(1+\varepsilon) + v_k \big)
  \right]  ,
  \quad
  \label{eq:scoreirw_g}
\end{align} 
which has a gradient of
\begin{align}
\lefteqn{ [\nabla g_2(\vec{v}\of{t})]_k }
\label{eq:scoreirw_grad} \\
&= 
\displaystyle
\left(
\frac{ L_{d(k)} }
{\sum\limits_{i\in\mc{K}_{d(k)}}\!\!\!\log\Big(1
        +\varepsilon
        \!+\!\frac{v_i\of{t}}{\epsilon_{d(k)}}
        \Big)} 
+ 1 \right)
\frac{1}{\epsilon_{d(k)}(1+\varepsilon)+v_k\of{t}}
\end{align}
when $d(k)\neq 0$ and otherwise $[\nabla g_2(\vec{v}\of{t})]_k=0$.
Thus, recalling \eqref{mm},  MM prescribes
\begin{align}
\vec{v}\of{t+1}
&= \arg\min_{\vec{v}\in\mc{C}} \sum_{d=1}^D \sum_{k\in\mc{K}_d} 
\left(
\frac{ L_{d} }
{\sum\limits_{i\in\mc{K}_{d}}\log\Big(1+\varepsilon
        +\frac{v_i\of{t}}{\epsilon_d}
        \Big)} 
+ 1 \right)
\nonumber\\&\qquad\times
\left(\frac{v_k}{\epsilon_d(1+\varepsilon) + v_k\of{t}}\right)
+ \gamma \|\vec{y}-[\vec{0}~\vec{\Phi}]\vec{v}\|_2^2
,
\end{align}
or equivalently 
\begin{align}
\vec{x}\of{t+1}
&= \arg\min_{\vec{x}} 
\sum_{d=1}^D \sum_{l=1}^{L_d} \lambda_d\of{t+1} 
\left(\frac{|\vec{\psi}_{d,l}\tran\vec{x}|}{\epsilon_d(1+\varepsilon)+|\vec{\psi}_{d,l}\tran\vec{x}\of{t}|}\right) 
\nonumber\\&\qquad
+ \gamma \|\vec{y}-\vec{\Phi x}\|_2^2
\end{align}
for
\begin{align}
\lambda_d\of{t+1}
&= \left[\frac{1}{L_d}\sum_{l=1}^{L_d}\log\bigg(
        1 +\varepsilon 
        +\frac{|\vec{\psi}_{d,l}\tran\vec{x}\of{t}|}{\epsilon_d} 
        \bigg) \right]^{-1} + 1, 
\end{align}
which coincides with \algref{scoreirw}. 
This establishes \partref{scoreirw_logsum} of \thmref{scoreirw}.

%%%%%%%%%%%%%%%%%%%%%%%%%%%%
\subsection{Convergence of \scoreirw} \label{sec:scoreirw_conv}

The convergence of \scoreirw (in the sense of an asymptotic stationary point condition) for $\varepsilon>0$ can be shown using the same procedure as in \secref{score_conv}.
To do this, we only need to verify that the gradient $\nabla g_2$ in \eqref{scoreirw_grad} is Lipschitz continuous when $\varepsilon>0$, which we do in \appref{scoreirw_lipschitz}.

%%%%%%%%%%%%%%%%%%%%%%%%%%%%
\subsection{Approximate \texorpdfstring{$\ell_0+\ell_{0,0}$}{L0+L00} Interpretation of \scoreirw} \label{sec:scoreirw_ell0}

Recalling the discussion in \secref{score_ell0}, we now consider the behavior of the $\Rlsl(\vec{x};\vec{\epsilon},\varepsilon)$ regularizer in \eqref{Rlsl} as $\varepsilon\rightarrow 0$ and $\epsilon_d\rightarrow 0~\forall d$.
For this, it helps to decouple \eqref{Rlsl} into two terms:
\begin{align}
\Rlsl(\vec{x};\vec{\epsilon},\varepsilon)
&= \sum_{d=1}^D \sum_{l=1}^{L_d} \log
  \big(\epsilon_d(1+\varepsilon) + |\vec{\psi}_{d,l}\tran\vec{x}|\big)
  \label{eq:Rlsl2}
  \\&\quad
  + \sum_{d=1}^D \sum_{l=1}^{L_d} \log \bigg[
  \sum_{i=1}^{L_d} \log\bigg(1
        +\varepsilon
        +\frac{|\vec{\psi}_{d,i}\tran\vec{x}|}{\epsilon_d}
        \bigg) \bigg].
  \nonumber
\end{align}
As $\epsilon_d\rightarrow 0~\forall d$, the first term in \eqref{Rlsl2} contributes an infinite valued ``reward'' for each pair $(d,l)$ such that $|\vec{\psi}_{d,l}\tran\vec{x}|=0$, or a finite valued cost otherwise.
As for the second term, we see that 
$\lim_{\varepsilon\rightarrow 0,\epsilon_d\rightarrow 0}\sum_{i=1}^{L_d} \log\big(1+\varepsilon+|\vec{\psi}_{d,i}\tran\vec{x}|/\epsilon_d\big) = 0$ if and only if $|\vec{\psi}_{d,i}\tran\vec{x}|=0~\forall i\in\{1,\dots,L_d\}$, i.e., if and only if $\|\vec{\Psi}_d\vec{x}\|_0=0$.
And when $\|\vec{\Psi}_d\vec{x}\|_0=0$, the second term in \eqref{Rlsl2} contributes $L_d$ infinite valued rewards.
In summary, as $\varepsilon\rightarrow 0$ and $\epsilon_d\rightarrow 0~\forall d$, the first term in \eqref{Rlsl2} behaves like $\|\vec{\Psi x}\|_0$ and the second term like the weighted $\ell_{0,0}$ quasi-norm $\sum_{d=1}^D L_d 1_{\|\vec{\Psi}_d\vec{x}\|_0>0}$, as stated in \eqref{scoreirw_ell0}.
This establishes \partref{scoreirw_ell0} of \thmref{scoreirw}.

%%%%%%%%%%%%%%%%%%%%%%%%%%%%
\subsection{Bayesian MAP Interpretation of \scoreirw} \label{sec:scoreirw_bayes}

To show that \scoreirw can be interpreted as Bayesian MAP estimation under the hierarchical prior \eqref{scoreirw_bayes_prior_x}-\eqref{scoreirw_bayes_prior_lam}, we first compute the prior $p(\vec{x})$.  To start,
\begin{align}
\lefteqn{
p(\vec{x})
= \int_{\Real^D} p(\vec{\lambda}) p(\vec{x}|\vec{\lambda}) \deriv\vec{\lambda}}\\
&\propto \prod_{d=1}^D \int_0^\infty \frac{1}{\lambda_d} 
\prod_{l=1}^{L_d} \frac{\lambda_d}{2\epsilon_d} 
\bigg(1 +\varepsilon
        +\frac{|\vec{\psi}_{d,l}\tran\vec{x}|}{\epsilon_d}
        \bigg)^{-(\lambda_d+1)}
\deriv\lambda_d . 
\end{align}
Writing $(1+\varepsilon+|\vec{\psi}_{d,l}\tran\vec{x}|/\epsilon_d)^{-(\lambda_d+1)} = \exp(-(\lambda_d+1) Q_{d,l})$ for $Q_{d,l} \defn \log(1+\varepsilon+|\vec{\psi}_{d,l}\tran\vec{x}|/\epsilon_d)$, we get
\begin{align}
p(\vec{x})
&\propto \prod_{d=1}^D \frac{1}{(2\epsilon_d)^{L_d}}
\int_0^\infty \lambda_d^{L_d-1} 
e^{-(\lambda_d+1)\sum_{l=1}^{L_d}Q_{d,l}}
\deriv\lambda_d .
\end{align}
Defining $Q_d\defn \sum_{l=1}^{L_d}Q_{d,l}$ and changing the variable of integration to $\tau_d \defn \lambda_d Q_d$, we find
\begin{align}
\lefteqn{
p(\vec{x})
\propto \prod_{d=1}^D \frac{e^{-Q_d}}{(2\epsilon_d Q_d)^{L_d}}
\underbrace{ \int_0^\infty \tau_d^{L_d-1} e^{-\tau_d} \deriv\tau_d }_{\displaystyle (L_d-1)! } 
}\\
&\propto \prod_{d=1}^D \Bigg[\frac{1}{\epsilon_d \sum_{i=1}^{L_d} \log(1
        +\varepsilon
        +\frac{|\vec{\psi}_{d,i}\tran\vec{x}|}{\epsilon_d}
        )}\Bigg]^{L_d}
\nonumber\\&\qquad\times
        \prod_{l=1}^{L_d} \frac{1}{1 +\varepsilon
        +\frac{|\vec{\psi}_{d,l}\tran\vec{x}|}{\epsilon_d} } \\
&= \prod_{d=1}^D \prod_{l=1}^{L_d} 
\Bigg[
\Big(\epsilon_d(1+\varepsilon)+|\vec{\psi}_{d,l}\tran\vec{x}|\Big) 
\nonumber\\&\qquad\times
\sum_{i=1}^{L_d} \log\bigg(1 +\varepsilon
        +\frac{|\vec{\psi}_{d,i}\tran\vec{x}|}{\epsilon_d} \bigg) \Bigg]^{-1} ,
\end{align}
which implies that
\begin{eqnarray}
-\log p(\vec{x})
&= \const + \Rlsl(\vec{x};\vec{\epsilon},\varepsilon)
\label{eq:scoreirw_bayes_logprior_x}
\end{eqnarray}
for $\Rlsl(\vec{x};\vec{\epsilon},\varepsilon)$ defined in \eqref{Rlsl}.

Plugging \eqref{scoreirw_bayes_logprior_x} into \eqref{map}, we see that
\begin{align}
\vec{x}\map 
&= \argmin_{\vec{x}} 
\gamma \|\vec{y}-\vec{\Phi x}\|_2^2 + 
\Rlsl(\vec{x};\vec{\epsilon},\varepsilon) ,
\end{align}
which is equivalent to the optimization problem in \eqref{scoreirw_logsum}.
We showed in \secref{scoreirw_logsum} that, by applying the MM algorithm to \eqref{scoreirw_logsum}, we arrive at \algref{scoreirw}.
This establishes \partref{scoreirw_bayes} of \thmref{scoreirw}.

%%%%%%%%%%%%%%%%%%%%%%%%%%%%
\subsection{Variational EM Interpretation of \scoreirw} \label{sec:scoreirw_em}

To justify the variational EM (VEM) interpretation of \scoreirw, we closely follow the approach used for \score in \secref{score_em}.
The main difference is that now the prior takes the form of $p(\vec{x};\vec{\lambda},\vec{\epsilon})$ from \eqref{scoreirw_em_prior_x}. 
Thus, \eqref{logpx_score} becomes
\begin{align}
\lefteqn{
\log p(\vec{x};\vec{\lambda},\vec{\epsilon})
}\nonumber\\
&= 
\sum_{d=1}^D \sum_{l=1}^{L_d} \Bigg[ \log\bigg(\frac{\lambda_d-1}{\epsilon_d}\bigg) 
- \lambda_d 
\log\bigg(1 +\varepsilon
        +\frac{|\vec{\psi}_{d,l}\tran\vec{x}|}{\epsilon_d} \bigg) \Bigg] 
\nonumber\\&\quad
+ \const 
\label{eq:logpx_scoreirw}
\end{align}
and by zeroing the gradient w.r.t.\ $\vec{\lambda}$ we see that
the M step \eqref{Mstep_score2} becomes 
\begin{align}
\frac{1}{\lambda_d\of{t+1}-1}
&= \frac{1}{L_d}
\log\bigg(1 +\varepsilon
        +\frac{|\vec{\psi}_{d,l}\tran\vec{x}\map\of{t}|}{\epsilon_d}
        \bigg),
~~d=1,...,D,
\label{eq:EMirw_score_exact}
\end{align}
where again $\vec{x}\map\of{t}$ denotes the MAP estimate of $\vec{x}$ under $\vec{\lambda}=\vec{\lambda}\of{t}$.
From \eqref{map} and \eqref{scoreirw_em_prior_x}, we see that  
\begin{align}
\vec{x}\map\of{t}
&= \argmin_{\vec{x}} \sum_{d=1}^D \lambda_d\of{t} \sum_{l=1}^{L_d} \log\big( |\vec{\psi}_{d,l}\tran\vec{x}| + \epsilon_d(1+\varepsilon) \big) 
\nonumber\\&\qquad
+ \gamma\|\vec{y}-\vec{\Phi x}\|_2^2 ,
\label{eq:scoreirw_em_map}
\end{align}
which (for $\varepsilon=0$) is a $\vec{\lambda}\of{t}$-weighted version of the \IRW log-sum optimization problem (recall \partref{irw_logsum} of \corref{irw}).
To solve \eqref{scoreirw_em_map}, we apply MM.
With a small modification of the MM derivation from \secref{score_logsum}, we obtain the 2-step iteration
\begin{align}
\vec{x}\map\of{i}
&= \argmin_{\vec{x}} 
\gamma\|\vec{y}-\vec{\Phi x}\|_2^2 +
\sum_{d=1}^D\lambda_d\of{t} \|\vec{W}_d\of{i}\vec{\Psi}_d\vec{x}\|_1 
\label{eq:scoreirw_em_map_mm1} \\
\vec{W}_d\of{i+1} 
&= \diag\bigg\{
        \frac{1}{\epsilon_d(1+\varepsilon)+|\vec{\psi}_{d,1}\tran\vec{x}\of{i}|}, 
        \cdots,
\nonumber\\&\hspace{14mm}
        \frac{1}{\epsilon_d(1+\varepsilon)+|\vec{\psi}_{d,L_d}\tran\vec{x}\of{i}|}
        \bigg\}.
\label{eq:scoreirw_em_map_mm2} 
\end{align}
By using only a single MM iteration per VEM iteration, the MM index ``$i$'' can be rewritten as the VEM index ``$t$,'' in which case the VEM algorithm becomes
\begin{align}
\vec{x}\of{t}
&= \argmin_{\vec{x}} 
\gamma\|\vec{y}-\vec{\Phi x}\|_2^2 +
\sum_{d=1}^D\lambda_d\of{t} \|\vec{W}_d\of{t}\vec{\Psi}_d\vec{x}\|_1 
\label{eq:scoreirw_em1} \\
\vec{W}_d\of{t+1} 
&= \diag\bigg\{
        \frac{1}{\epsilon_d(1+\varepsilon)+|\vec{\psi}_{d,1}\tran\vec{x}\of{t}|}, 
        \dots,
\nonumber\\&\hspace{14mm}
        \frac{1}{\epsilon_d(1+\varepsilon)+|\vec{\psi}_{d,L_d}\tran\vec{x}\of{t}|}
        \bigg\}, \forall d
\label{eq:scoreirw_em2} \\
\lambda_d\of{t+1}
&= \Bigg[ \frac{1}{L_d}
\log\bigg(1 +\varepsilon
        +\frac{|\vec{\psi}_{d,l}\tran\vec{x}\of{t}|}{\epsilon_d} \bigg)
\Bigg]^{-1}+1, ~~\forall d
\label{eq:scoreirw_em3} ,
\end{align}
which matches the steps in \algref{scoreirw}. 
This establishes \partref{scoreirw_em} of \thmref{scoreirw}.

%%%%%%%%%%%%%%%%%%%%%%%%%%%%
\subsection{\ascoreirw} \label{sec:scoreirw_epsilon}

Until now, we have considered the \scoreirw parameters $\vec{\epsilon}=[\epsilon_1,\dots,\epsilon_D]\tran$ to be fixed and known.
But it is not clear how to set these parameters in practice.
Thus, in this section, we describe an extension of \scoreirw that adapts the $\vec{\epsilon}$ vector at every iteration.
The resulting procedure, which we will refer to as \ascoreirw, is summarized in \algref{ascoreirw}.

\begin{algorithm}
  \caption{The \ascoreirw Algorithm}
  \label{alg:ascoreirw}
  \begin{algorithmic}[1]
    \State
    \mysf{input:~~} $\{\vec{\Psi}_d\}_{d=1}^D$, $\vec{\Phi}$, $\vec{y}$, 
    $\gamma>0$, 
    $\varepsilon\geq 0$\\
    \begin{tabular}[t]{@{}l@{}}
    \mysf{if} $\vec{\Psi x}\in\Real^{L}$, \mysf{use} $\Lambda\!=\!(1,\infty)$ \mysf{and} $\log p(\vec{x};\vec{\lambda},\vec{\epsilon})$ \mysf{from \eqref{logpx_scoreirw};}\\
    \mysf{if} $\vec{\Psi x}\in\Complex^{L}$, \mysf{use} $\Lambda\!=\!(2,\infty)$ \mysf{and} $\log p(\vec{x};\vec{\lambda},\vec{\epsilon})$ \mysf{from \eqref{logpx_scoreirw_cmplx}.}
    \end{tabular}\\
    \mysf{initialization:~~} $\lambda_d\of{1}=1~\forall d$,
    $\vec{W}_d\of{1} = \vec{I}~\forall d$
    \State
    \mysf{for~} $t=1,2,3,\dots$\\
    \label{line:ascoreirw_x}
    \quad $\vec{x}\of{t} \gets 
        \begin{array}[t]{@{}l}
        \displaystyle \argmin_{\vec{x}} 
        \gamma\|\vec{y} - \vec{\Phi x}\|_2^2 +
        \sum_{d=1}^D \lambda_d\of{t}\|\vec{W}_d\of{t}\vec{\Psi}_d\vec{x} \|_1 
        \end{array}$\\
    \quad 
    \label{line:ascoreirw_lam} 
    $(\lambda_d\of{t+1},\epsilon_d\of{t+1})
    \gets \begin{array}[t]{@{}l}
        \displaystyle 
        \arg\max_{\lambda_d\in\Lambda,\epsilon_d>0}
        \log p(\vec{x}\of{t};\vec{\lambda},\vec{\epsilon}),\\
        ~d=1,...,D 
        \end{array}$ \\
    \label{line:ascoreirw_wgt} 
    \quad 
    $\vec{W}_d\of{t+1} \gets 
        \begin{array}[t]{@{}l}
        \displaystyle 
        \diag\bigg\{
        \frac{1}{\epsilon_d\of{t+1}(1+\varepsilon)+|\vec{\psi}_{d,1}\tran\vec{x}\of{t}|}, 
        \cdots,
        \\\hspace{10mm} \displaystyle 
        \frac{1}{\epsilon_d\of{t+1}(1+\varepsilon)+|\vec{\psi}_{d,L_d}\tran\vec{x}\of{t}|}
        \bigg\},
        ~\forall d
        \end{array}$\\
    \mysf{end}\\     
    \mysf{output:~~}$\vec{x}\of{t}$  
  \end{algorithmic}
\end{algorithm}

Although there does not appear to be a closed-form solution to the joint maximization problem in \lineref{ascoreirw_lam} of \algref{ascoreirw}, it is over two real parameters and thus can be solved numerically without a significant computational burden.

\algref{ascoreirw} can be interpreted as a generalization of the VEM approach to \scoreirw that is summarized in \partref{scoreirw_em} of \thmref{scoreirw} and detailed in \secref{scoreirw_em}.
Whereas \scoreirw used VEM to estimate the $\vec{\lambda}$ parameters in the prior \eqref{scoreirw_em_prior_x} for a fixed value of $\vec{\epsilon}$, \ascoreirw uses VEM to \emph{jointly} estimate $(\vec{\lambda},\vec{\epsilon})$ in \eqref{scoreirw_em_prior_x}.
Thus, \ascoreirw can be derived by repeating the steps in \secref{scoreirw_em}, except that now the maximization of $\log p(\vec{x};\vec{\lambda},\vec{\epsilon})$ in \eqref{logpx_scoreirw} is performed jointly over $(\vec{\lambda},\vec{\epsilon})$, as reflected by \lineref{ascoreirw_lam} of \algref{ascoreirw}.

%%%%%%%%%%%%%%%%%%%%%%%%%%%%
\subsection{\ascoreirw for Complex-Valued \texorpdfstring{$\vec{\Psi}_d\vec{x}$}{Analysis Outputs}} \label{sec:scoreirw_complex}

In Sections~\ref{sec:scoreirw_logsum}-\ref{sec:scoreirw_epsilon}, the analysis outputs $\vec{\Psi}_d\vec{x}$ were assumed to be real-valued.  
We now extend the previous results to the case of complex-valued $\vec{\Psi}_d\vec{x}$.
For this, we focus on the \ascoreirw algorithm, since \scoreirw follows as the special case where $\vec{\epsilon}$ is fixed at a user-supplied value.

Recalling that \ascoreirw was constructed by generalizing the VEM interpretation of \scoreirw, we reconsider this VEM interpretation for the case of complex-valued $\vec{\Psi}_d\vec{x}$.
In particular, we assume an AWGN likelihood and the following complex-valued extension of the prior \eqref{scoreirw_em_prior_x}:
\begin{align}
p(\vec{x};\vec{\lambda},\vec{\epsilon}) 
&\propto \prod_{d=1}^D \prod_{l=1}^{L_d}\frac{(\lambda_d-1)(\lambda_d-2)}{2\pi\epsilon_d^2} \bigg(1+\varepsilon+\frac{|\vec{\psi}_{d,l}\tran\vec{x}|}{\epsilon_d}\bigg)^{-\lambda_d}
\label{eq:scoreirw_em_prior_x_cmplx} 
\end{align}
which (for $\varepsilon=0$) is i.i.d.\ generalized-Pareto on $\vec{z}_d=\vec{\Psi}_d\vec{x}\in\Complex^{L_d}$ with deterministic shape parameter $\lambda_d>2$ and deterministic scale parameter $\epsilon_d>0$.
In this case, the log-prior \eqref{logpx_scoreirw} changes to
\begin{align}
\log p(\vec{x};\vec{\lambda},\vec{\epsilon})
&= \const + \sum_{d=1}^D \sum_{l=1}^{L_d} \Bigg[ \log\bigg(\frac{(\lambda_d-1)(\lambda_d-2)}{\epsilon_d^2}\bigg) 
\nonumber\\&\quad
- \lambda_d 
\log\bigg(1+\varepsilon+\frac{|\vec{\psi}_{d,l}\tran\vec{x}|}{\epsilon_d}\bigg)
\Bigg] 
\label{eq:logpx_scoreirw_cmplx}
\end{align}
which is then maximized over $(\vec{\lambda},\vec{\epsilon})$ in \lineref{ascoreirw_lam} of \algref{ascoreirw}.

%%%%%%%%%%%%%%%%%%%%%%%%%%%%%%%%%%%
\section{Numerical Results} \label{sec:num}

We now present results from a numerical study into the performance of the proposed \score and \ascoreirw methods, given as \algref{score} and \algref{ascoreirw}, respectively. 
Three experiments are discussed below, all of which focus on the problem of recovering an $N$-pixel image (or image sequence) $\vec{x}$ from $M$-sample noisy compressed measurements $\vec{y}=\vec{\Phi x}+\vec{w}$, with $M\ll N$.
In the first experiment, we recover synthetic 2D finite-difference signals;
in the second experiment, we recover the Shepp-Logan phantom and the Cameraman image; 
and in the third experiment, we recover dynamic MRI sequences, also known as ``cines.''

As discussed in \secref{related}, \score can be considered as the composite extension of the standard L1-regularized approach to analysis CS, i.e., \eqref{opt} under the non-composite L1 regularizer $R(\vec{x})=\|\vec{\Psi x}\|_1$.
Similarly, \ascoreirw can be considered as the composite extension of the standard IRW approach to the same problem.
Thus, we compare our proposed composite methods against these two non-composite methods, referring to them simply as ``L1'' and ``\IRW'' in the sequel. 

%%%%%%%%%%%%%%%%%%%%%%%%%%%%%%%%%%%
\subsection{Experimental Setup} \label{sec:setup}

For the dynamic MRI experiment, we constructed $\vec{\Phi}$ using randomly sub-sampled Fourier measurements at each time instant with a varying sampling pattern across time. 
More details are given in \secref{mriRes}.
For the other experiments, we used a ``spread spectrum'' operator \cite{Puy:JASP:12} of the form
$\vec{\Phi}=\vec{DFC}$, where $\vec{C}\in\Real^{N\times N}$ is diagonal matrix with i.i.d equiprobable $\pm 1$ entries, $\vec{F}\in\Complex^{N\times N}$ is the discrete Fourier transform (DFT), and $\vec{D}\in\Real^{M\times N}$ is a row-selection operator that selects $M$ rows of $\vec{FC}\in\Complex^{N\times N}$ uniformly at random.

In all cases, the noise $\vec{w}$ was zero-mean, white, and circular Gaussian (i.e., independent real and imaginary components of equal variance).  
Denoting the noise variance by $\sigma^2$, we define the measurement signal-to-noise ratio (SNR) as $\|\vec{y}\|_2^2/(M\sigma^2)$ and the recovery SNR of signal estimate $\hvec{x}$ as $\|\vec{x}\|_2^2 / \|\vec{x}-\hvec{x}\|^2_2$.

Note that, when $\vec{x}$ is real-valued, the measurements $\vec{y}$ will be complex-valued due to the construction of $\vec{\Phi}$. 
Thus, to allow the use of real-valued L1 solvers, we split each complex-valued element of $\vec{y}$ (and the corresponding rows of $\vec{\Phi}$ and $\vec{w}$) into real and imaginary components, resulting in a real-only model.
However, to avoid possible redundancy issues caused by the conjugate symmetry of the noiseless Fourier measurements $\vec{FCx}$, we ensured that $\vec{D}$ selected at most one sample from each complex-conjugate pair.

We used MFISTA \cite{Tan:TSP:14} to implement the L2+L1 optimization needed for all methods.
The maximum number of outer, reweighting iterations for \score and \ascoreirw was set to 16, while the maximum number of inner MFISTA iterations was set at 60, with early termination if $\|\vec{x}^{\left(t\right)}-\vec{x}^{\left(t-1\right)}\|_2 / \|\vec{x}^{\left(t\right)} \|_2 < 1\times 10^{-6}$. 
In all experiments, we used $\gamma=1/\sigma^2$ (as motivated before \eqref{score_x_con}) and $\epsilon=0=\varepsilon$.

%%%%%%%%%%%%%%%%%%%%%%%%%%%%%%%%%%%
\subsection{Synthetic 2D Finite-Difference Signals} \label{sec:sim}

Our first experiment aims to answer the following question. If we know that the sparsity of $\vec{\Psi}_1\vec{x}$ differs from the sparsity of $\vec{\Psi}_2\vec{x}$, then can we exploit this knowledge for signal recovery, even if we don't know \emph{how} the sparsities are different? This is precisely the goal of composite regularizations like \eqref{Rell}.

To investigate this question, we constructed 2D signals with finite-difference structure in both the vertical and horizontal domains.
In particular, we constructed $\vec{X}=\vec{x}_1\vec{1}\tran + \vec{1}\vec{x}_2\tran$, where both $\vec{x}_1\in\Real^{48}$ and $\vec{x}_2\in\Real^{48}$ are finite-difference signals and $\vec{1}\in\Real^{48}$ contains only ones.
The locations of the transitions in $\vec{x}_1$ and $\vec{x}_2$ were selected uniformly at random and the amplitudes of the transitions were drawn i.i.d. zero-mean Gaussian.
The total number of transitions in $\vec{x}_1$ and $\vec{x}_2$ was fixed at $28$, but the ratio of the number of transitions in $\vec{x}_1$ to the number in $\vec{x}_2$, denoted by $\alpha$, was varied from $1$ to $27$.
The case $\alpha=1$ corresponds to 
$\vec{X}$ having $14$ vertical transitions and $14$ horizontal transitions, 
while the case $\alpha=27$ corresponds to $\vec{X}$ having $27$ vertical transitions and a single horizontal transition. 
(See \figref{aniso_alpha_1} for examples.)
Finally, the signal $\vec{x}\in\Real^N$ appearing in our model \eqref{y} was created by vectorizing $\vec{X}$, yielding a total of $N=48^2=2304$ pixels.

\twoFrag{aniso_alpha_1}{aniso_alpha_27} {0.4} {0.4}
        {Examples of the 2D finite-difference signal $\vec{X}$ used in the first experiment.  On the left is a realization generated under a transition ratio of $\alpha=14/14=1$, and on the right is a realization generated under $\alpha=27/1=27$.}
        {}

Given $\vec{x}$, noisy observations $\vec{y}=\vec{\Phi x}+\vec{w}$ were generated using the random ``spread spectrum'' measurement operator $\vec{\Phi}$ described earlier at a sampling ratio of $M/N=0.25$, with additive white Gaussian noise (AWGN) $\vec{w}$ scaled to achieve a measurement SNR of $40$~dB. All recovery algorithms used vertical and horizontal finite-difference operators $\vec{\Psi}_1$ and $\vec{\Psi}_2$, respectively, with $\vec{\Psi}=[\vec{\Psi}_1\tran,\vec{\Psi}_2\tran]\tran$ in the non-composite case.

\Figref{anisoRmse} shows recovery SNR versus $\alpha$ for the non-composite L1 and \IRW techniques and our proposed \score and \ascoreirw techniques.
Each SNR in the figure represents the median value from $25$ trials, each using an independent realization of the triple $(\vec{\Phi},\vec{x},\vec{w})$.
The figure shows that the recovery SNR of both L1 and \IRW is roughly invariant to the transition ratio $\alpha$, which makes sense because the overall sparsity of $\vec{\Psi x}$ is fixed at $28$ transitions by construction.
In contrast, the recovery SNRs of \score and \ascoreirw vary with $\alpha$, with higher values of $\alpha$ yielding a more structured signal and thus higher recovery SNR when this structure is properly exploited.

\putFrag{anisoRmse}
        {Recovery SNR versus transition ratio $\alpha$ for the first experiment, which used 2D finite-difference signals, spread-spectrum measurements at $M/N=0.25$, AWGN at $40$~dB, and finite-difference operators for $\vec{\Psi}_d$.  Each recovery SNR represents the median value from $25$ independent trials.}
{\figsize}
{\newcommand{\sz}{0.9}
 \newcommand{\szz}{0.75}
 \psfrag{yaxis}[b][b][\sz]{\sf recovery SNR [dB]}
 \psfrag{xaxis}[t][t][\sz]{\sf transition ratio $\alpha$}  
 } 

%%%%%%%%%%%%%%%%%%%%%%%%%%%%%%%%%%%
\subsection{Cameraman and Shepp-Logan Recovery} \label{sec:sl}

For our second experiment, we investigate algorithm performance versus sampling ratio $M/N$ when recovering the well-known Shepp-Logan phantom and Cameraman images.
In particular, we used the $N=96\times 104$ cropped real-valued Cameraman image and the $N=96\times 96$ complex-valued Shepp-Logan phantom shown in \figref{cameraman}, and we constructed compressed noisy measurements $\vec{y}$ using spread-spectrum $\vec{\Phi}$ and AWGN $\vec{w}$ at a measurement SNR of $30$~dB in the Cameraman case and $40$~dB in the Shepp-Logan case.

\twoFrag{cameraman}{shepp} {0.422} {0.39}
        {Left: the real-valued cropped Cameraman image of size $N=96\times 104$. Right: the complex-valued Shepp-Logan phantom of size $N=96\times 96$. For the Shepp-Logan phantom, the real and imaginary parts of $\vec{x}$ were identical, and only the real part is shown here.} {}

For the Cameraman image, we constructed the analysis operator $\vec{\Psi}\in\Real^{8N\times N}$ by concatenating undecimated db1 and db2 2D wavelet transforms (UWT-db1-db2) with one level of decomposition. 
For the Shepp-Logan phantom image, we constructed the analysis operator $\vec{\Psi}\in\Real^{4N\times N}$ from the undecimated db1 2D wavelet transform (UWT-db1) with one level of decomposition.
The \score and \ascoreirw algorithms treated each of the sub-bands of the wavelet transform as a separate sub-dictionary $\vec{\Psi}_d$ in their composite regularizers.

\figref{cameraRmse} shows recovery SNR versus sampling ratio $M/N$ for the Cameraman image, while \figref{sheppRmse} shows the same for the Shepp-Logan phantom. 
Each recovery SNR represents the median value from $7$ independent realizations of $(\vec{\Phi},\vec{w})$. 
Both figures show that \score and \ascoreirw outperform their non-composite counterparts, especially at low sampling ratios; the gap between \ascoreirw and and \IRW closes at $M/N\geq 0.35$ for the Shepp-Logan phantom. 

\putFrag{cameraRmse}
        {Recovery SNR versus sampling ratio $M/N$ for the cropped Cameraman image.  Measurements were constructed using a spread-spectrum operator and AWGN at $30$~dB SNR, and recovery used UWT-db1-db2 at one level of decomposition. Each SNR value represents the median value from $7$ independent trials.}
{\figsize}
{\newcommand{\sz}{0.9}
 \newcommand{\szz}{0.75}
 \psfrag{yaxis}[b][b][\sz]{\sf recovery SNR [dB]}
 \psfrag{xaxis}[t][t][\sz]{\sf sampling ratio $M/N$}}

\putFrag{sheppRmse}
        {Recovery SNR versus sampling ratio $M/N$ for the Shepp-Logan phantom. Measurements were constructed using a spread-spectrum operator and AWGN at $40$~dB SNR, and recovery used UWT-db1 at one level of decomposition.  Each recovery SNR represents the median value from $7$ independent trials.}
{\figsize}
{\newcommand{\sz}{0.9}
 \newcommand{\szz}{0.75}
 \psfrag{yaxis}[b][b][\sz]{\sf recovery SNR [dB]}
 \psfrag{xaxis}[t][t][\sz]{\sf sampling ratio $M/N$}}

%%%%%%%%%%%%%%%%%%%%%%%%%%%%%%%%%%%
\subsection{Dynamic MRI} \label{sec:mriRes}

For our third experiment, we investigate a simplified version of the ``dynamic MRI'' (dMRI) problem.
In dMRI, one attempts to recover a sequence of MRI images, known as an MRI cine, from highly under-sampled ``\textsf{k-t}-domain'' measurements $\{\vec{y}_t\}_{t=1}^T$ constructed as 
\begin{align}
\vec{y}_t
= \vec{\Phi}_t \vec{x}_t + \vec{w}_t ,
\label{eq:yt}
\end{align}
where $\vec{x}_t\in\Real^{N_1 N_2}$ is a vectorized ($N_1\times N_2$)-pixel image at time $t$, $\vec{\Phi}_t\in\Real^{M_1\times N_1 N_2}$ is a sub-sampled Fourier operator at time $t$, and $\vec{w}_t\in\Real^{M_1}$ is AWGN. 
This real-valued $\vec{\Phi}_t$ is constructed from the complex-valued $N_1 N_2 \times N_1 N_2$ 2D DFT matrix by randomly selecting $0.5 M_1$ rows and then splitting each of those rows into its real and imaginary components. 
Here, it is usually advantageous to vary the sampling pattern with time and to sample more densely at low frequencies, where most of the signal energy lies (e.g., \cite{Ahmad_2014_VISTA}).
Putting \eqref{yt} into the form of our measurement model \eqref{y}, we get 
\begin{align}
\underbrace{ 
\mat{\vec{y}_1\\[-1mm]\vdots\\\vec{y}_T} 
}_{\displaystyle \vec{y}}
= 
\underbrace{ 
\mat{\vec{\Phi}_1\\[-1mm]&\ddots\\&&\vec{\Phi}_T} 
}_{\displaystyle \vec{\Phi}}
\underbrace{ 
\mat{\vec{x}_1\\[-1mm]\vdots\\\vec{x}_T}
}_{\displaystyle \vec{x}}
+ 
\underbrace{ 
\mat{\vec{w}_1\\[-1mm]\vdots\\\vec{w}_T}
}_{\displaystyle \vec{w}} ,
\end{align}
with total measurement dimension $M=M_1 T$ and total signal dimension $N=N_1 N_2 T$.

As ground truth, we used a high-quality dMRI cardiac cine $\vec{x}$ of dimensions $N_1=144$, $N_2=85$, and $T=48$.
The left pane in \figref{mri2d} shows a $144\times 85$ image from this cine extracted at a single time $t$, while the middle pane shows a $144\times 48$ spatio-temporal profile from this cine extracted at a single horizontal location.
This middle pane shows that the temporal dimension is much more structured than the spatial dimension, suggesting that there may be an advantage to weighting the spatial and temporal dimensions differently in a composite regularizer.

\threeFrag{mri2d}{mri1d}{mriSampling} {0.366} {0.21} {0.21}
        {Left: A $144\times 85$ spatial slice from the $144\times 85\times 48$ dMRI dataset. Middle: The $144\times 48$ spatio-temporal slice used for the dMRI experiment.  Right: a realization of the variable-density \textsf{k}-space sampling pattern, versus time, at $M/N=0.30$.} {}

To test this hypothesis, we constructed an experiment where the goal was to recover the $144\times 48$ spatio-temporal profile shown in the middle pane of \figref{mri2d}, as opposed to the full 3D cine, from subsampled \textsf{k-t}-domain measurements.
For this purpose, we constructed measurements $\{\vec{y}\}_{t=1}^T$ as described above, but with $N_2=1$ (and thus a 1D DFT), and used a variable density random sampling method.
The right pane of \figref{mri2d} shows a typical realization of the sampling pattern versus time.
Finally, we selected the AWGN variance that yielded measurement SNR $=30$~dB. 

For the non-composite L1 and \IRW algorithms, we constructed the analysis operator $\vec{\Psi}\in\Real^{3N\times N}$ from a vertical concatenation of the db1-db3 orthogonal 2D discrete wavelet bases, each with two levels of decomposition.
For the \score and \ascoreirw algorithms, we assigned each of the 21 sub-bands in $\vec{\Psi}$ to a separate sub-dictionary $\vec{\Psi}_d\in\Real^{L_d\times N}$.
Note that the sub-dictionary size $L_d$ decreases with the level in the decomposition.
By weighting certain sub-dictionaries differently than others, the composite regularizers can exploit differences in spatial versus temporal structure.

\figref{mriRmse} shows recovery SNR versus sampling ratio $M/N$ for the four algorithms under test.
Each reported SNR represents the median SNR from $7$ independent realizations of $(\vec{\Phi},\vec{w})$.
The figure shows that \score outperforms its non-composite counterparts at all tested values of $M/N$, while \ascoreirw outperforms its noncomposite counterpart for $M/N\leq 0.4$. 
Although not shown here, we obtained similar results with other cine datasets and with an UWT-db1-based analysis operator.

\putFrag{mriRmse}
{Recovery SNR versus sampling ratio $M/N$ for the dMRI experiment. Each SNR value represents the median value from $7$ independent trials. Measurements were constructed using variable-density sub-sampled Fourier operator and AWGN at $30$~dB measurement SNR, and recovery used a concatenation of db1-db3 orthogonal 2D wavelet bases at two levels of decomposition. 
}
{\figsize}
{\newcommand{\sz}{0.9}
 \newcommand{\szz}{0.75}
 \psfrag{yaxis}[b][b][\sz]{\sf recovery SNR [dB]}
 \psfrag{xaxis}[t][t][\sz]{\sf sampling ratio $M/N$}}

For qualitative comparison, \figref{mriRecon} shows the spatio-temporal profile recovered by each of the four algorithms under test at $M/N=0.3$ for a typical realization of $(\vec{\Phi},\vec{w})$.
Compared to the ground-truth profile shown in the middle pane of \figref{mri2d}, the profiles recovered by L1 and \IRW show visible artifacts that appear as vertical streaks.
In contrast, the profiles recovered by \score and \ascoreirw preserve most of the features present in the ground-truth profile. 

\begin{figure}
         \centering
         \captionsetup[subfigure]{labelformat=empty}
         \hfill
         \begin{subfigure}[t]{0.21\columnwidth}
                 \includegraphics[width=\columnwidth]{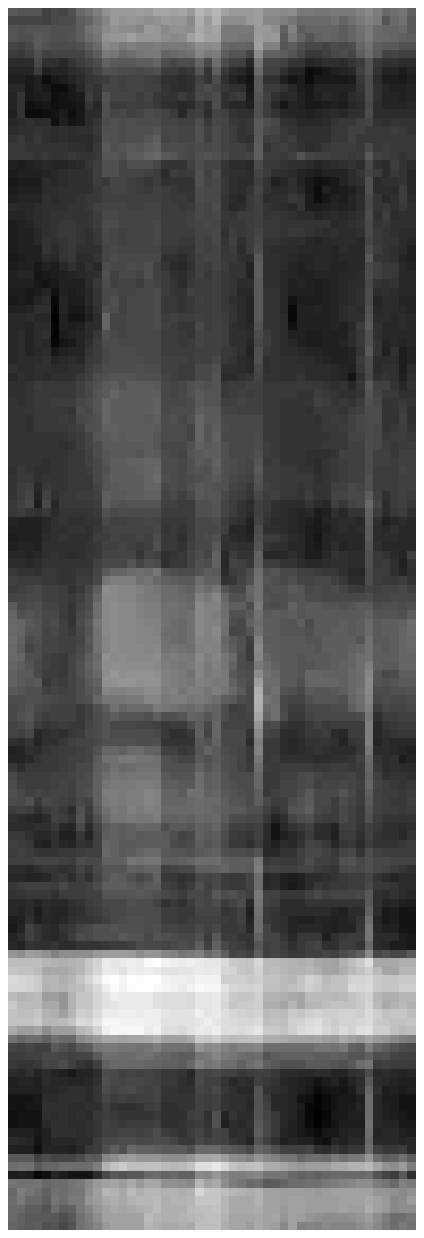}
                 \caption{\small \sf L1}
         \end{subfigure}%
         \hfill
         \begin{subfigure}[t]{0.21\columnwidth}
                 \includegraphics[width=\columnwidth]{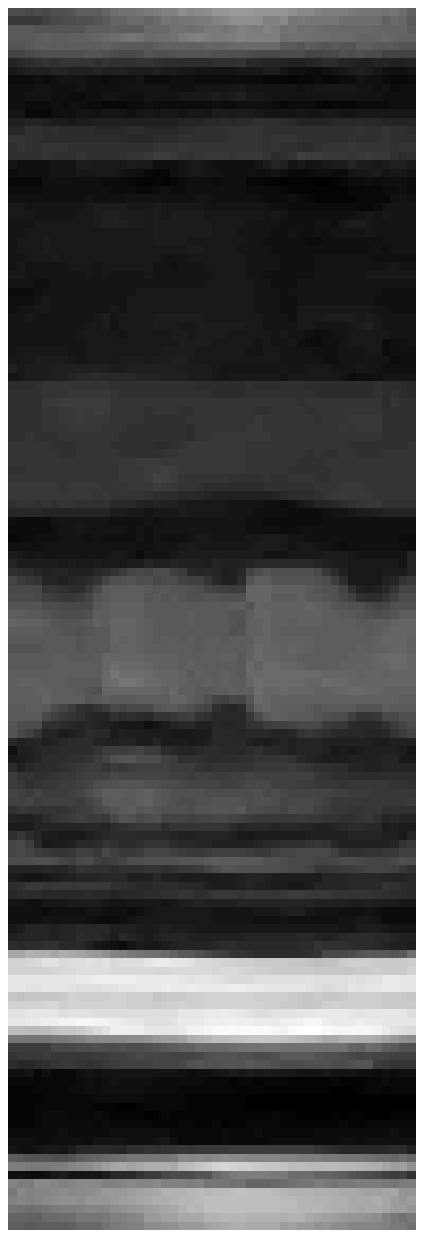}
                 \caption{\small \sf \score}
         \end{subfigure}
         \hfill
         \begin{subfigure}[t]{0.21\columnwidth}
                 \includegraphics[width=\columnwidth]{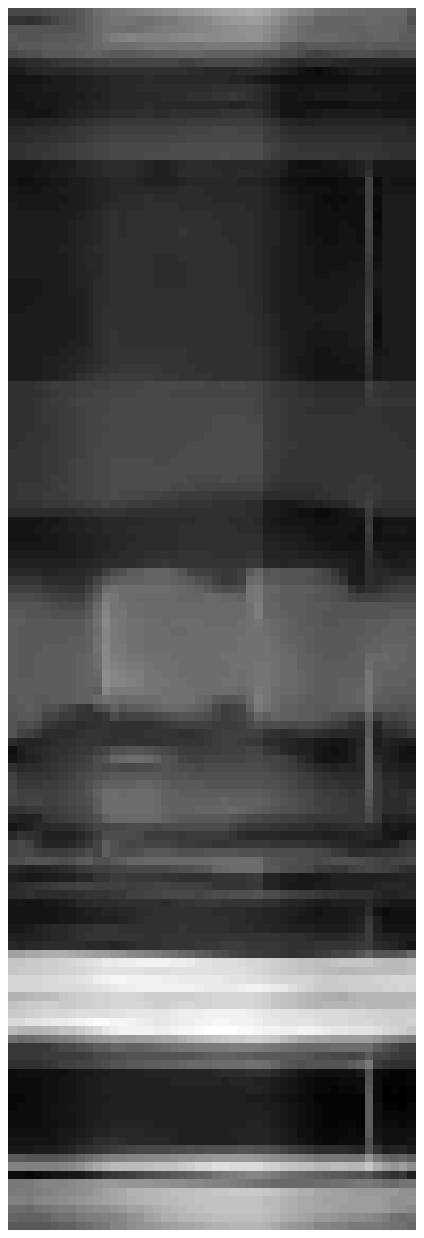}
                 \caption{\small \sf \IRW}
         \end{subfigure}
         \hfill
         \begin{subfigure}[t]{0.21\columnwidth}
                 \includegraphics[width=\columnwidth]{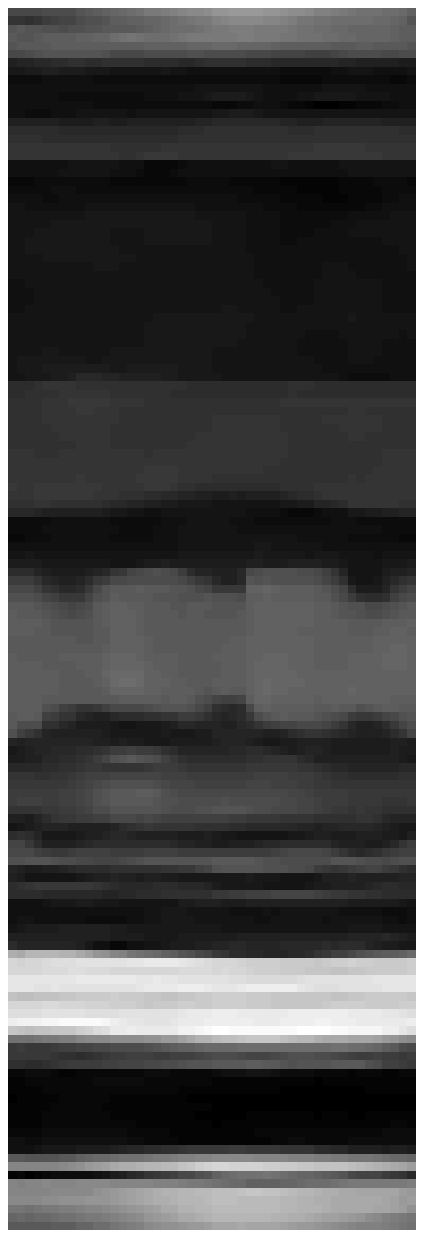}
                 \caption{\small \sf \ascoreirw}
         \end{subfigure}
         \hfill
         \caption{\small Recovered dMRI spatio-temporal profiles at $M/N=0.30$}
         \label{fig:mriRecon}
\end{figure}

%%%%%%%%%%%%%%%%%%%%%%%%%%%%%%%%%%%
\subsection{Algorithm Runtime} \label{sec:runtime}

\tabref{runtime} reports the average runtimes of the L1, \score, \IRW, and \ascoreirw algorithms for the experiments in Sections~\ref{sec:sl}~and~\ref{sec:mriRes}. 
There we see that the runtime of \score was $1.29\times$ that of L1 for the worst case, and the runtime of \ascoreirw was $1.33\times$ that of \IRW for the worst case.

\putTable{runtime}{Computation times (in seconds) for the presented experimental studies. The times are averaged over trial runs and different sampling ratios.}
{\begin{tabular}{|l|c|c|c|} \hline
  		    & Shepp-Logan 	& Cameraman 	& MRI 		\\ \hline
L1 		    & 8.12 		& 9.88		& 22.0   	\\ \hline
\score  	& 8.83 		& 12.8		& 21.7   	\\ \hline
\IRW 		& 7.95 		& 12.7		& 24.1  	\\ \hline
\ascoreirw 	& 9.29		& 16.9		& 29.6    	\\ \hline
\end{tabular}
}

%%%%%%%%%%%%%%%%%%%%%%%%%%%%%%%%%%%
\subsection{Choice of Dictionary} \label{sec:dictionary}
In our last experiment, we investigate the performance of \ascoreirw versus choice of $\{\vec{\Psi}_d\}$.
For this, we constructed $\{\vec{\Psi}_d\}$ using a concatenation of either undecimated or orthogonal 2D Daubechies wavelet transforms, and we varied both the number of transforms in the concatenation as well as the number of levels in the wavelet decomposition.
We then attempted to recover the Cameraman image from spread-spectrum measurements at $M/N=0.4$ in AWGN at $30$~dB SNR.
As usual, the \ascoreirw algorithm treated each wavelet sub-band as a separate sub-dictionary.

The recovery SNR for various choices of $\vec{\Psi}$ is shown in \figref{cameraDictRmse}. 
For the case of orthogonal wavelet transforms (OWT), a significant performance improvement was observed in going from one to two transforms, regardless of the wavelet decomposition level.
However, a slight performance degradation was observed when concatenating more than two OTWs. 
Moreover, the effect of varying the level of decomposition was mild unless no concatenation (i.e., db1) was used.
For the undecimated wavelet transform (UWT) case, the recovery SNR was essentially invariant to both the level of decomposition and the number of concatenated transforms, with only a slight degradation when five transforms were concatenated.
Overall, the UWT performed significantly better than the OWT. 
Similar trends were observed for the \score algorithm in experiments not shown here.  

\putFrag{cameraDictRmse}
        {\ascoreirw recovery SNR for different choices of $\vec{\Psi}_d$. Measurements were constructed from the cropped cameraman image using a spread-spectrum operator, AWGN at $30$~dB SNR, and sampling ratio $M/N=0.40$. Here, OWT represents a concatenation of 2D orthogonal Daubechies wavelet transforms, UWT represents a concatenation of 2D undecimated Daubechies wavelet transforms, and ``lvl'' denotes the level of decomposition.  Each SNR value represents the median value from $3$ independent trials.}
{\figsize}
{\newcommand{\sz}{0.9}
 \newcommand{\szz}{0.72}
 \psfrag{WT,lvl-1}[l][l][\szz]{\sf 1-lvl OWT}
 \psfrag{WT,lvl-2}[l][l][\szz]{\sf 2-lvl OWT}
 \psfrag{WT,lvl-3}[l][l][\szz]{\sf 3-lvl OWT}
 \psfrag{UWT,lvl-1}[l][l][\szz]{\sf 1-lvl UWT}
 \psfrag{UWT,lvl-2}[l][l][\szz]{\sf 2-lvl UWT}
 \psfrag{UWT,lvl-3}[l][l][\szz]{\sf 3-lvl UWT}
 \psfrag{yaxis}[b][b][\sz]{\sf recovery SNR [dB]}
 \psfrag{xaxis}[t][t][\sz]{\sf \# concatenated transforms}}

%%%%%%%%%%%%%%%%%%%%%%%%%%%%%%%%%%%
\section{Conclusions} \label{sec:con}

Motivated by the observation that a given signal $\vec{x}$ admits sparse representations in multiple dictionaries $\vec{\Psi}_d$ but with varying levels of sparsity across dictionaries, we proposed two new algorithms for the reconstruction of (approximately) sparse signals from noisy linear measurements.
Our first algorithm, \score, extends the well-known lasso algorithm \cite{Tibshirani:JRSSb:96,Chen:JSC:98,Tibshirani:AS:11} from the L1 penalty $\|\vec{\Psi x}\|_1$ to composite L1 penalties of the form \eqref{Rell} while self-adjusting the regularization weights $\lambda_d$.
Our second algorithm, \ascoreirw, extends the well-known IRW-L1 algorithm \cite{Figueiredo:TIP:07,Candes:JFA:08,Wipf:JSTSP:10} to the same family of composite penalties while self-adjusting the regularization weights $\lambda_d$ and the regularization parameters $\epsilon_d$.

We provided several interpretations of both algorithms: 
i) majorization-minimization (MM) applied to a non-convex log-sum-type penalty,
ii) MM applied to an approximate $\ell_0$-type penalty,
iii) MM applied to Bayesian MAP inference under a particular hierarchical prior, and 
iv) variational expectation-maximization (VEM) under a particular prior with deterministic unknown parameters.
Also, we leveraged the MM interpretation to establish convergence in the form of an asymptotic stationary point condition \cite{Mairal:ICML:13}.
Furthermore, we noted that the Bayesian MAP and VEM viewpoints yield novel interpretations of the original IRW-L1 algorithm.
Finally, we present a detailed numerical study that suggests that our proposed algorithms yield significantly improved recovery SNR when compared to their non-composite L1 and IRW-L1 counterparts with a modest (e.g., $1.3\times$) increase in runtime.

%%%%%%%%%%%%%%%%%%%%%%%%%%%%%%%%%%%
\section{Acknowledgment}

The authors thank the anonymous reviewers for their valuable feedback.

\appendices
%%%%%%%%%%%%%%%%%%%%%%%%%%%%%%%%%%%
\section{Lipschitz continuity of \score gradient} \label{app:score_lipschitz}

In this appendix, we establish the Lipschitz continuity of $\nabla g_2$ from \eqref{score_grad} in the case that $\epsilon>0$.
We first recall that, for $\nabla g_2$ to be Lipschitz continuous over the domain $\vec{v}\in\mc{C}$, there must exist some constant $\beta$ such that, for all $\vec{v},\vec{v}'\in\mc{C}$,
\begin{align}
\|\nabla g_2(\vec{v})-\nabla g_2(\vec{v}') \|_2^2 
&\leq \beta \|\vec{v}-\vec{v}'\|_2^2
\label{eq:lipschitz}
\end{align}
From \eqref{score_grad}, we have 
\begin{align}
\lefteqn{ 
\|\nabla g_2(\vec{v})-\nabla g_2(\vec{v}') \|_2^2 
}\nonumber\\
&= \sum_{k=1}^L \bigg(  
 \frac{L_{d(k)}}{\epsilon + \sum_{i\in\mc{K}_{d(k)}} v_{i}} 
-\frac{L_{d(k)}}{\epsilon + \sum_{i\in\mc{K}_{d(k)}} v_{i}'} 
\bigg)^2 \\
&= \sum_{k=1}^L 
 \frac{L_{d(k)}^2 \big[\sum_{i\in\mc{K}_{d(k)}} (v_{i}'-v_{i})\big]^2}
        {\big(\epsilon + \sum_{i\in\mc{K}_{d(k)}} v_{i}\big)^2
         \big(\epsilon + \sum_{i\in\mc{K}_{d(k)}} v_{i}'\big)^2} \\
&= \sum_{d=1}^{D}\sum_{l=1}^{L_d}
 \frac{L_{d}^2 \big[\sum_{i=1}^{L_d} (u_{d,i}'-u_{d,i})\big]^2}
        {\big(\epsilon + \sum_{i\in\mc{K}_{d(k)}} v_{i}\big)^2
         \big(\epsilon + \sum_{i\in\mc{K}_{d(k)}} v_{i}'\big)^2} .
\end{align}
We can then upper bound the latter as follows.
\begin{align}
\|\nabla g_2(\vec{v})-\nabla g_2(\vec{v}') \|_2^2 
&\leq \sum_{d=1}^{D}\sum_{l=1}^{L_d}
 \frac{L_{d}^2}{\epsilon^4} \bigg[\sum_{i=1}^{L_d} (u_{d,i}'-u_{d,i})\bigg]^2 
 \label{eq:score_lip1}\\
&\leq \sum_{d=1}^{D}
 \frac{L_{d}^3}{\epsilon^4} \bigg[\sum_{i=1}^{L_d} |u_{d,i}'-u_{d,i}|\bigg]^2\\
&\leq \sum_{d=1}^{D}
 \frac{L_{d}^4}{\epsilon^4} \sum_{i=1}^{L_d} (u_{d,i}'-u_{d,i})^2 
 \label{eq:score_lip2} \\
&\leq \frac{L_{\max}^4}{\epsilon^4} \sum_{k=1}^L (v_{k}'-v_{k})^2 
 \label{eq:score_lip3} \\
&\leq \frac{L_{\max}^4}{\epsilon^4} \sum_{k=1}^{L+N} (v_{k}'-v_{k})^2 \\
&= \frac{L_{\max}^4}{\epsilon^4} \|\vec{v}-\vec{v}'\|_2^2 ,
 \label{eq:score_lip} 
\end{align}
where 
\eqref{score_lip1} follows from the fact that $u_{d,l}\geq 0~\forall d,l$ (according to \eqref{score_logsum2}),
\eqref{score_lip2} follows from the fact that $\|\vec{x}\|_1\leq \sqrt{N} \|\vec{x}\|_2$ for $\vec{x}\in\Complex^N$,
and \eqref{score_lip3} uses $L_{\max}\defn \max_{d}L_d$.
Comparing \eqref{score_lip} to \eqref{lipschitz}, we see that $\nabla g_2$ from \eqref{score_grad} is Lipschitz continuous.

%%%%%%%%%%%%%%%%%%%%%%%%%%%%%%%%%%%
\section{Equivalence of Log-Sum and \texorpdfstring{$\ell_0$}{L0} Minimization} \label{app:l0}

In this appendix, we establish that the log-sum optimization \eqref{logsum} becomes equivalent to the $\ell_0$ optimization \eqref{l0} as $\epsilon\rightarrow 0$.
We first note that,
for any $\epsilon>0$,
\begin{align}
\lefteqn{ \frac{1}{\log(1/\epsilon)} \sum_{n=1}^N \log(\epsilon+|x_n|) }\\
&= \frac{1}{\log(1/\epsilon)} \left[ \sum_{n:\, x_n=0} \log(\epsilon)
        + \sum_{n:\,x_n\neq 0} \log(\epsilon+|x_n|) \right] \\
&= \|\vec{x}\|_0 - N
        + \frac{ \sum_{n:\,x_n\neq 0} \log(\epsilon+|x_n|)}{\log(1/\epsilon)} ,
\end{align}
where $\|\vec{x}\|_0$ is defined as the counting norm, i.e., $\|\vec{x}\|_0\defn|\{x_n: x_n\neq 0\}|$.
Applying this result to the objective function in \eqref{logsum}, we have
\begin{align}
\lefteqn{
\gamma \|\vec{y}-\vec{\Phi x}\|_2^2 + \sum_{n=1}^N \log(\epsilon+|x_n|) 
}\nonumber \\
&\propto 
        \underbrace{ \frac{\gamma}{\log(1/\epsilon)} }_{\displaystyle \defn \gamma'} \|\vec{y}-\vec{\Phi x}\|_2^2 + 
        \|\vec{x}\|_0 - N
        + \frac{ \displaystyle \sum_{n:\,x_n\neq 0} \log(\epsilon+|x_n|)}{\log(1/\epsilon)} 
\label{eq:scaled_logsum} .
\end{align}
Clearly the global scaling and offset by $N$ in \eqref{scaled_logsum} are inconsequential to the minimization in \eqref{logsum}.
Furthermore, by making $\epsilon>0$ arbitrarily small, we can make the last term in \eqref{scaled_logsum} arbitrarily small\footnote{Note that, as $\epsilon\rightarrow 0$, the numerator of the last term in \eqref{scaled_logsum} converges to the finite value $\sum_{n:\,x_n\neq 0} \log(|x_n|)$ while the denominator grows to $+\infty$.} and thus negligible compared to the other terms. 
It is in this sense that we say that \eqref{logsum} is equivalent to \eqref{l0} as $\epsilon\rightarrow 0$.

%%%%%%%%%%%%%%%%%%%%%%%%%%%%%%%%%%%
\section{Lipschitz continuity of \scoreirw gradient} \label{app:scoreirw_lipschitz}

In this appendix, we establish the Lipschitz continuity of $\nabla g_2$ from \eqref{scoreirw_grad} in the case that $\varepsilon>0$, recalling the Lipschitz definition \eqref{lipschitz}.
To ease the exposition, we focus on the $L=1$ case, noting that a similar (but more tedious) technique can be applied to the general case.

From the $L=1$ case of \eqref{scoreirw_grad}, we have
\begin{align}
\lefteqn{
|\nabla g_2(v)-\nabla g_2(v')|^2
}\nonumber\\
&= \bigg[ 
        \bigg(\frac{1}{\log(1+\varepsilon+\frac{v}{\epsilon_1})}+1\bigg)
        \frac{1}{\epsilon_1(1+\varepsilon)+v}
\nonumber\\&\quad
       -\bigg(\frac{1}{\log(1+\varepsilon+\frac{v'}{\epsilon_1})}+1\bigg)
        \frac{1}{\epsilon_1(1+\varepsilon)+v'}
\bigg]^2 \\
&= \big[A+B\big]^2 \\
&\leq \big[|A|+|B|\big]^2 \leq 2 \big[A^2+B^2\big] ,
\label{eq:grad_diff}
\end{align}
since $\|\vec{x}\|_1 \leq \sqrt{N}\|\vec{x}\|_2$ for $\vec{x}\in\Complex^N$, and where
\begin{align}
A
&\defn \frac{1}{\epsilon_1(1+\varepsilon)+v}
        -\frac{1}{\epsilon_1(1+\varepsilon)+v'} \\
B
&\defn \frac{1}{ (\epsilon_1(1+\varepsilon)+v)
        \log(1+\varepsilon+\frac{v}{\epsilon_1}) }
\nonumber\\&\quad
      -\frac{1}{ (\epsilon_1(1+\varepsilon)+v')
        \log(1+\varepsilon+\frac{v'}{\epsilon_1}) } .
\end{align}
Examining $A^2$, we find that
\begin{align}
A^2
&= \bigg( \frac{1}{\epsilon_1(1+\varepsilon)+v}
        -\frac{1}{\epsilon_1(1+\varepsilon)+v'} \bigg)^2 \\
&= \bigg( \frac{\epsilon_1(1+\varepsilon)+v' - [\epsilon_1(1+\varepsilon)+v]}
        {[\epsilon_1(1+\varepsilon)+v][\epsilon_1(1+\varepsilon)+v']}
        \bigg)^2 \\
&\leq (v'-v)^2/\epsilon_1^4
\label{eq:A}
\end{align}
since $\epsilon_1,\varepsilon>0$ and $v,v'\geq 0$.
Next, we write $B^2$ as 
\begin{align}
B^2
&= \frac{1}{\epsilon_1^2}\bigg( 
        \frac{1}{ \alpha \log(\alpha) }
       -\frac{1}{ \alpha' \log(\alpha') } 
   \bigg)^2 \\
&= \frac{1}{\epsilon_1^2} \bigg(
        \frac{\alpha' \log(\alpha')- \alpha\log(\alpha) }
        {\alpha \log(\alpha)\alpha' \log(\alpha')}
   \bigg)^2
\end{align}
with $\alpha\defn 1+\varepsilon+\frac{v}{\epsilon_1}$
and $\alpha' \defn 1+\varepsilon+\frac{v'}{\epsilon_1}$, and realize
\begin{align}
\lefteqn{ \alpha' \log(\alpha')- \alpha\log(\alpha) }\nonumber\\
&= (\alpha + \frac{v'-v}{\epsilon_1}) \log(\alpha')- \alpha\log(\alpha) \\
&= \alpha\log(\alpha') - \alpha\log(\alpha) + \frac{v'-v}{\epsilon_1}\log(\alpha')
\end{align}
which implies that
\begin{align}
B^2
&=\frac{1}{\epsilon_1^2}\bigg(
\underbrace{
\frac{1}{\alpha'\log(\alpha)}
-\frac{1}{\alpha'\log(\alpha')}
}_{\displaystyle \defn B_1}
+\underbrace{\frac{(v'-v)/\epsilon_1}{\alpha\alpha'\log(\alpha)}
}_{\displaystyle \defn B_2}
\bigg)^2 \\
&\leq \frac{\big[|B_1|+|B_2|\big]^2}{\epsilon_1^2} \leq \frac{2 \big[B_1^2+B_2^2\big]}{\epsilon_1^2} .
\label{eq:B}
\end{align}
Examining $B_1^2$ we find
\begin{align}
B_1^2
&= \frac{1}{\alpha'^2}\bigg( 
        \frac{1}{ \log(\alpha) }
       -\frac{1}{ \log(\alpha') } 
   \bigg)^2 \\
&= \frac{1}{\alpha'^2}\bigg( 
       \frac{\log(\alpha')-\log(\alpha)}{\log(\alpha)\log(\alpha')}
   \bigg)^2 \\
&= \frac{1}{\alpha'^2}
       \frac{\log(\alpha'/\alpha)^2}{\log(\alpha)^2\log(\alpha')^2}. 
\end{align}
Because $\epsilon_1,\varepsilon>0$ and $v,v'\geq 0$, we have that $\alpha,\alpha'>1$ and $\log(\alpha)^2\geq \log(1+\varepsilon)$ and $\log(\alpha')^2\geq \log(1+\varepsilon)$, so that
\begin{align}
B_1^2
\leq \frac{\log(\alpha'/\alpha)^2}{\log(1+\varepsilon)^4}. 
\label{eq:B1}
\end{align}
Moreover,
\begin{align}
\log(\alpha'/\alpha)^2
&= \log\bigg(\frac{\alpha+\frac{v'-v}{\epsilon_1}}{\alpha}\bigg)^2 \\
&= \log\bigg(1+\frac{v'-v}{\epsilon_1\alpha}\bigg)^2 \\
&\leq \max\Big\{\Big(\frac{v'-v}{\epsilon_1\alpha}\Big)^2 ,
                \Big(\frac{v'-v}{\epsilon_1\alpha + v'-v}\Big)^2 \Big\}
  \label{eq:log_alf1}\\
&= \frac{(v'-v)^2}{\epsilon_1^2}\max\Big\{\frac{1}{\alpha^2},
                                          \frac{1}{(\alpha')^2}\Big\}\\
&\leq \frac{(v'-v)^2}{\epsilon_1^2} 
  \label{eq:log_alf2},
\end{align}
where \eqref{log_alf1} used the property that $\frac{x}{1+x}\leq \log(1+x) \leq x$ for $x>-1$, and \eqref{log_alf2} used $\alpha,\alpha'>1$.
Finally, we have
\begin{align}
B_2^2 
&= \frac{(v'-v)^2}{\epsilon_1^2 (\alpha \alpha')^2 \log(1+\varepsilon+v/\epsilon_1)^2} \\
&\leq \frac{(v'-v)^2}{\epsilon_1^2 \log(1+\varepsilon)^2} 
\label{eq:B2}
\end{align}
where the latter step used $\alpha,\alpha'>1$ and $1+\varepsilon>0$ and $v/\epsilon_1\geq 0$.
Putting together \eqref{grad_diff}, \eqref{A}, \eqref{B}, \eqref{B1}, \eqref{log_alf2} and \eqref{B2}, we see that there exists $\beta>0$ such that
\begin{align}
|\nabla g_2(v)-\nabla g_2(v')|^2
\leq \beta (v'-v)^2 ~\forall (v',v)\in\mc{C},
\end{align}
implying that $\nabla g_2$ is Lipschitz continuous.

%%%%%%%%%%%%%%%%%%%%%%%%%%%%%%%%%%%
\bibliographystyle{IEEEtran}
\bibliography{col1,macros_abbrev,books,misc,sparse}
\end{document}